\colorlet{myred}{red!25}
\colorlet{myblue}{blue!25}
\colorlet{mygreen}{green!25}
\newcommand{\BibTeX}{\rm B\kern-.05em{\sc i\kern-.025em b}\kern-.08em\TeX}
\tikzset{snake it/.style={decorate, decoration=snake}}
\colorlet{mygray}{gray!40}
\let\oldnl\nl
\newcommand{\nonl}{\renewcommand{\nl}{\let\nl\oldnl}}
\newtheorem{theorem}{Theorem}
\newtheorem{lemma}[theorem]{Lemma}
\newtheorem{definition}[theorem]{Definition}
\newcommand{\cA}{\mathcal{A}}
\newcommand{\N}{\mathbb{N}}
\newcommand{\oadj}{\text{ordered adjacent}\xspace}
\newcommand{\chainalg}{\textsc{ChainEF1}}
\newcommand{\fullalg}{\textsc{SwapEF1}}
\DeclareMathOperator{\argmax}{argmax}
\newcommand{\tI}{\tilde{I}}
\newcommand{\tM}{\tilde{M}}
\newcommand{\tE}{\tilde{E}}
\newcommand{\tG}{\tilde{G}}
\newcommand{\tv}{\tilde{v}}
\newcommand{\tA}{\tilde{A}}
\newcommand{\tcA}{\tilde{\mathcal{A}}}
\theoremstyle{remark}
\title{Dividing Conflicting Items Fairly}
\author{
	\begin{tabular}{m{0.12\textwidth}m{0.12\textwidth}m{0.12\textwidth}m{0.12\textwidth}
 }
		\multicolumn{2}{c}{\textbf{Ayumi Igarashi}} & \multicolumn{2}{c}{\textbf{Pasin Manurangsi}} 
        \\
		\multicolumn{2}{c}{The University of Tokyo} & \multicolumn{2}{c}{Google Research}
        \\ 
		\multicolumn{2}{c}{\href{mailto:igarashi@mist.i.u-tokyo.ac.jp}{\small{\texttt{igarashi@mist.i.u-tokyo.ac.jp}}}}&
        \multicolumn{2}{c}{\href{mailto:pasin@google.com}{\small{\texttt{pasin@google.com}}}} 
        \\
        &&&\\
		\multicolumn{4}{c}{\textbf{Hirotaka Yoneda}}\\
            \multicolumn{4}{c}{The University of Tokyo}\\
            \multicolumn{4}{c}{\href{mailto: yoneda-h@g.ecc.u-tokyo.ac.jp}{\small{\texttt{yoneda-h@g.ecc.u-tokyo.ac.jp}}}}
	\end{tabular}
}
\date{}
\begin{document}

\maketitle 
\begin{abstract}
We study the allocation of indivisible goods under conflicting constraints, represented by a graph. In this framework, vertices correspond to goods and edges correspond to conflicts between a pair of goods. Each agent is allocated an independent set in the graph. In a recent work of Kumar et al.~\cite{KEG+24}, it was shown that a maximal EF1 allocation exists for interval graphs and two agents with monotone valuations. We significantly extend this result by establishing that a maximal EF1 allocation exists for \emph{any graph} when the two agents have monotone valuations. To compute such an allocation, we present a polynomial-time algorithm for additive valuations, as well as a pseudo-polynomial time algorithm for monotone valuations. Moreover, we complement our findings by providing a counterexample demonstrating a maximal EF1 allocation may not exist for three agents with monotone valuations; further, we establish NP-hardness of determining the existence of such allocations for every fixed number $n \geq 3$ of agents. All of our results for goods also apply to the allocation of chores. 
\end{abstract}

\section{Introduction}
How can we allocate a resource fairly? This problem was first formalized by the pioneering work of Steinhaus~\cite{Steinhaus48} and has since been extensively studied in the fields of economics, mathematics, and computer science under the umbrella of \emph{fair division}. Applications of fair division arise in many real-life situations, including 
the allocation of courses among students~\cite{BudishCaKe17}, the division of family inheritance among family members~\cite{GoldmanPr14}, and the division of household chores between couples~\cite{IgarashiYo23}.

A central notion of fairness in fair division is \emph{envy-freeness}, which requires that every agent is allocated their most preferred bundle in the allocation. However, such a fairness guarantee is impossible to achieve when dealing with indivisible resources, such as courses, houses, or tasks. Consequently, recent literature on discrete fair division has focused on approximate fairness, exploring various concepts and algorithmic results~\cite{AAB+23fair}. One particular influential relaxation of envy-freeness is, \emph{envy-freeness up to one good (EF1)}, introduced by Budish~\cite{Budish11}, allowing agents to remove one good from others' bundle to eliminate envy.  
This concept has garnered significant attention over the past decade. It is known that for general classes of monotone valuations, an EF1 allocation exists and can be computed efficiently~\cite{LiptonMaMo04}.

Most studies on fair division assume that any allocation is feasible. While this assumption may hold in some cases, many practical scenarios involve constraints that restrict the structure of allocations. For instance, consider the allocation of multiple offices among several people over different periods of time. If the time intervals associated with two offices overlap, they cannot be assigned to the same person. Similar constraints arise in job scheduling, where overlapping shifts cannot be allocated to the same employee. Another example is the allocation of players to sports teams. If two players have overlapping areas of expertise, it is preferable not to assign them to the same team. 
A versatile framework for modeling such constraints, explored in a series of recent papers~\cite{ChiarelliKMPPS20,HummelH22,KEG+24}, represents conflicts among indivisible resources using a graph structure, where vertices correspond to resources and edges represent conflicts. 

Conflicting constraints introduce new challenges, as standard fairness and efficiency concepts often become unattainable.
Notably, canonical efficiency concepts such as completeness and Pareto-optimality are incompatible with EF1 under these constraints. In fact, with conflicting constraints, a complete allocation that assigns all goods may not always exist. Furthermore, even if such an allocation exists, there are simple instances where a complete EF1 allocation is unattainable~\cite{HummelH22}.

Kumar et al.~\cite{KEG+24} studied chore allocation under conflicting constraints, observing that even on a path, EF1 is incompatible with Pareto-optimality for two agents with identical additive valuations.\footnote{In~\cite{KEG+24}, Pareto-optimality is defined with respect to maximal allocations. } To address this limitation, they introduced the notion of \emph{maximality}. A maximal allocation ensures that no unassigned item can be feasibly allocated to some agent. Kumar et al. showed that for interval graphs---a common structure in job scheduling---a maximal EF1 allocation among two agents always exists and can be efficiently found for monotone valuations. 
However, the existence of a maximal EF1 allocation for more general graph families remains unresolved, offering a rich avenue for further research.

\paragraph{Our contributions.}
In this paper, we study the allocation of indivisible goods under conflicting constraints. Our goal is to identify conditions under which a maximal EF1 allocation exists and can be efficiently computed. Our main contributions are as follows:

\begin{enumerate}
    \item {\bf Two-Agent-Case}: We significantly extend the result of Kumar et al.~\cite{KEG+24} by establishing that a maximal EF1 allocation exists for \emph{any graph} when the agents have monotone valuations. Further, we develop efficient algorithms for finding such allocations, including a polynomial-time algorithm for additive valuations and a pseudo-polynomial-time algorithm for monotone valuations. 
    Note that the two-agent case is of particular importance in fair division, with various applications, including inheritance division, house-chore division, and divorce settlements~\cite{BramsFi00,brams2014two,IgarashiYo23}.
    
    \item {\bf Three or More Agents}: We establish a sharp dichotomy from the two-agent case in terms of both existence and computational complexity. First, we provide an example where a maximal EF1 allocation fails to exist, even for three agents with monotone valuations. While Hummel and Hetland~\cite{HummelH22} previously identified a counterexample for four agents, no such example was known for three agents. 
    We also prove the NP-hardnesss of determining the existence of a maximal EF1 allocation for a fixed number $n\geq 3$ of agents with monotone valuations. 
   
    \item {\bf Chore Allocation}: Finally, we consider the problem of chore allocation, where each agent has a monotone non-increasing valuation. We show that the existence of a maximal EF1 allocation under identical valuations directly translates from the goods case, establishing that all our results for goods hold for chores as well. 
\end{enumerate}

\paragraph{Related work.}
There is a growing body of research on fair division under constraints. For a comprehensive survey on the topic, see Suksompong~\cite{Suksompong21}. Conflicting constraints in the context of fair division were introduced by Chiarelli et al.~\cite{ChiarelliKMPPS20} and have been further explored by \cite{HummelH22,KEG+24,BiswasFORC2023,LiFairScheduling2021}. Chiarelli et al.~\cite{ChiarelliKMPPS20} explored different fairness objectives from ours, focusing on partial allocations that maximize the egalitarian social welfare---defined as the value of a bundle received by the worst-off agent. Hummel and Hetland~\cite{HummelH22} studied complete allocations satisfying fairness criteria such as EF1 and MMS (maximin fair share). 
Biswas et al.~\cite{BiswasFORC2023} generalized the model of \cite{HummelH22}, taking into account capacity of resources.  
Li et al.~\cite{LiFairScheduling2021} and Kumar et al.~\cite{KEG+24} considered an interval scheduling problem, with the goal of achieving fairness concepts such as EF1 and MMS. While Li et al.~\cite{LiFairScheduling2021} focused on goods allocation with flexible intervals, Kumar et al.~\cite{KEG+24} examined chore allocation.

Our problem is closely related to the problem of \emph{equitable coloring} in graph theory, which corresponds to a complete EF1 allocation when agents have \emph{uniform} valuations (namely, all goods are equally valued $1$). In Appendix~\ref{sec:uniform}, we make this connection more precise and show that for such valuations, a maximal EF1 allocation exists on trees. 

A related type of constraints to conflicting constraints is the connectivity constraints of a graph~\cite{BiloCaFl22,BouveretCeEl17}, where each agent receives a connected bundle of a graph. Note that while connectivity constraints imposed by a sparse graph such as a tree allow fewer feasible allocations, in our setting, sparsity implies greater flexibility, as it increases the number of feasible allocations.

\section{Preliminaries}\label{sec:prelim}
For any natural number $s \in \mathbb{N}$, let $[s] = \{1,2,\ldots,s\}$.

\paragraph{Problem instance.}
We use $M=[m]$ to denote the set of \emph{goods} and $N=[n]$ to denote the set of \emph{agents}. Let $G=(M,E)$ denote an undirected graph, where each vertex corresponds to a good and each edge corresponds to a conflict. Each agent $i$ has a \emph{valuation function} $v_i:2^M \rightarrow \mathbb{R}_+$; here, $\mathbb{R}_+$ is the set of non-negative reals. We assume that $v_i(\emptyset)=0$. For simplicity, the valuation of a single good $g \in M$, $v_i(\{g\})$, is denoted by $v_i(g)$. 
An instance of our problem is given by the tuple $( N, M, \mathcal{V},G)$ where $\mathcal{V} = (v_1,v_2,\dots,v_n)$ denotes a \emph{valuation profile}. We use $K_{s,t}$ to denote a complete bipartite graph with a bipartition in which one part contains $s$ vertices and another part includes $t$ vertices. 

\paragraph{Valuation function.}
A valuation function $v_i$ is \emph{monotone non-decreasing} if $v_i(S) \leq v_i(T)$ for every $S \subseteq T \subseteq M$. Unless specified otherwise, we refer to such a function simply as \emph{monotone}.  
It is \emph{additive} if $v_i(S)=\sum_{g\in S} v_i(g)$ holds for every $S \subseteq M$ and $i\in N$. A valuation profile $\mathcal{V}$ is called \emph{identical} if every agent $i \in N$ has the same valuation function; in this case, we denote this function by $v$. Let $T(m)$ denote the time to compute valuation $v_i(S)$ for a given $S \subseteq M$. 

\paragraph{Allocation.}
An \emph{allocation} is an ordered subpartition $\mathcal{A}=(A_1,\ldots, A_n)$ of $M$ where for every pair of distinct agents $i,j \in N$, $A_i \cap A_j = \emptyset$, $\cup_{i \in N} A_i \subseteq M$, and for each $i \in N$, $A_i$ is an \emph{independent set} of $G$, namely, there is no pair of goods in $A_i$ that are adjacent to each other. The subset $A_i$ is called the \emph{bundle} of agent~$i$. An allocation is \emph{complete} if all goods are allocated, i.e., $\cup_{i\in N} A_i=M$.

\paragraph{Fairness and efficiency notions.}
An allocation is \emph{envy-free} if for every pair of agents $i,j \in N$, we have $v_i(A_i) \geq v_i(A_j)$~\cite{F67resource,GS58puzzle}. It is called \emph{envy-free up to one good} (EF1) if for every pair of agents $i,j \in N$, either $A_j = \emptyset$, or there exists some good $g \in A_j$ such that $v_i(A_i) \geq v_i(A_j \setminus \{g\})$~\cite{Budish11,LiptonMaMo04}. 

As discussed in the Introduction, observe that a complete allocation may not always exist (e.g., consider a complete graph $K_2$ with one agent). Further, even when a complete allocation exists, a complete EF1 allocation may not exist: For instance, in a setting with $n$ agents and a star where the center has a value of $0$ and each of the $n+1$ leaves has a value of $1$, an agent receiving the center cannot receive any other good, while at least one agent receives two or more leaves.  

Besides completeness, another commonly used notion of efficiency in fair division is \emph{Pareto-optimality}.  Similar to the chore setting~\cite{KEG+24}, EF1 is incompatible with Pareto-optimality for goods instances. Specifically, an allocation $\mathcal{A}'$ \emph{Pareto-dominates} another allocation $\mathcal{A}$ if $v_i(A'_i) \geq v_i(A_i)$ for every $i \in N$ and the inequality is strict for at least one agent. An allocation $\mathcal{A}$ is \emph{Pareto-optimal} (PO) if there is no other allocation that Pareto-dominates it. Consider two agents and a cycle of four goods with values $1,3,1,3$ in order. In any EF1 allocation, no agent should receive both goods of value $3$, and no agent can receive two goods of different values due to conflicting constraints. However, such an allocation is Pareto-dominated by assigning the two high-value goods to one agent and the two low-value goods to the other. 

As in Kumar et al.~\cite{KEG+24}, we therefore consider a relaxed efficiency notion of {\em maximality}. An allocation $\mathcal{A}$ is \emph{maximal} if for every agent $i \in N$ and every unallocated good $g \in M \setminus \bigcup_{i \in N}A_i$, $g$ is adjacent to some good in $A_i$. Our goal is to achieve an allocation that simultaneously satisfies maximality and EF1. 

\section{Two Agents}\label{sec:twoagents}
In this section, we consider the case of two agents.

\subsection{Cut-and-Choose Protocol}

When there are two agents, we can use a well-known strategy called \textbf{cut-and-choose protocol}~\cite{BramsTa96}.

\begin{theorem}\label{thm:cutchoose}
	Suppose that a maximal EF1 allocation always exists and can be computed in $\tau$ time for instances with two agents and identical valuations. Then, a maximal EF1 allocation always exists and can be computed in $\tau + 2T(m)$ time for instances with two agents. 
\end{theorem}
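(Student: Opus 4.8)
The plan is to run a cut-and-choose protocol in which agent~$1$ plays the role of the cutter and agent~$2$ plays the role of the chooser. First I would construct the identical-valuation instance $(N, M, (v_1, v_1), G)$ in which both agents are pretended to share agent~$1$'s valuation, and invoke the assumed subroutine to obtain, in time $\tau$, an allocation $\mathcal{A} = (A_1, A_2)$ that is maximal and EF1 with respect to $v_1$. Then I would let the chooser, agent~$2$, evaluate both bundles under their true valuation $v_2$ and take the one they prefer; formally, pick $k \in \{1,2\}$ with $v_2(A_k) \ge v_2(A_{3-k})$, assign $A_k$ to agent~$2$ and $A_{3-k}$ to agent~$1$. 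Evaluating $v_2(A_1)$ and $v_2(A_2)$ costs $2T(m)$, giving the claimed total running time of $\tau + 2T(m)$.

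Next I would verify that the resulting allocation is maximal and EF1 under the true profile $(v_1, v_2)$. Maximality is the easy half: by definition it requires that every unallocated good be adjacent to some good in \emph{both} bundles, a condition that is symmetric in the two bundles and depends only on the unordered pair $\{A_1, A_2\}$ and the graph, not on which agent receives which bundle or on the valuations. Since $\mathcal{A}$ is maximal, swapping the owners of the bundles preserves maximality. For EF1, the chooser is trivially satisfied: by the choice of $k$, agent~$2$ does not envy agent~$1$ at all, so the EF1 condition holds automatically. The crux is the cutter. Here I would use that, because the subroutine's output is EF1 for the \emph{identical} valuation $v_1$, the EF1 guarantee holds in \emph{both} directions under $v_1$: there is a good whose removal from $A_2$ makes agent~$1$ non-envious of $A_2$, and likewise a good whose removal from $A_1$ makes the holder of $A_2$ non-envious of $A_1$. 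Whichever bundle the chooser leaves behind, the corresponding direction of this two-sided guarantee is exactly the EF1 condition agent~$1$ needs toward agent~$2$ under $v_1$, which is agent~$1$'s true valuation.

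I expect no serious obstacle: the entire argument reduces to the two observations above, namely that maximality is invariant under relabeling the bundle owners and that an identical-valuation EF1 allocation is simultaneously EF1 in both directions. The only point requiring care is to phrase the EF1 checks so that the empty-bundle edge cases (when $A_1$ or $A_2$ is empty) are handled, but these fall immediately under the $A_j = \emptyset$ clause of the EF1 definition.
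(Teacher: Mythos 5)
Your proposal is correct and follows exactly the paper's own cut-and-choose argument: run the subroutine on the identical-valuation instance with $v_1$, let agent~$2$ pick the preferred bundle, and observe that maximality is independent of which agent holds which bundle while the two-sided EF1 guarantee under $v_1$ covers the cutter. The paper states this in three sentences; your write-up just makes the same steps explicit.
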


\begin{proof}
	Let $(A_1, A_2)$ a maximal EF1 allocation in a hypothetical scenario that valuations of both agents are $v_1$. Then, agent $2$ chooses a preferred bundle, leaving the reminder for $1$. The resulting allocation is maximal and EF1.
\end{proof}

Now, the question is whether a maximal EF1 allocation exists for identical valuations. In subsections 3.2-3.5, we assume that the valuations are identical and monotone, and valuations for both agents are represented by $v(S) \ (S \subseteq M)$.

\subsection{Proof Strategy of Kumar et al.}

To describe our proof strategy, let us (informally) review Kumar et al.'s proof of the existence of maximal EF1 allocation when $G$ is a path. 
At a high level, the idea of the proof is to construct a chain of maximal allocations $\cA^{(0)}, \dots, \cA^{(m-1)}$, as illustrated in Figure \ref{fig:chain_pathgraph}, satisfying the following two properties: (i) adjacent allocations only ``differ slightly'' and (ii) $(A^{(0)}_1, A^{(0)}_2) = (A^{(m-1)}_2, A^{(m-1)}_1)$. The latter implies that the signs 
of $v(A^{(0)}_1) - v(A^{(0)}_2)$ and $v(A^{(m-1)}_1) - v(A^{(m-1)}_2)$ are different.
Therefore, there exists an $i$ that the signs of $v(A^{(i)}_1) - v(A^{(i)}_2)$ and $v(A^{(i + 1)}_1) - v(A^{(i + 1)}_2)$ are different. At this point, they show that at least one of $\mathcal{A}^{(i)}$ or $\mathcal{A}^{(i+1)}$ must be EF1, which shows the existence of a maximal EF1 allocation. 
This concludes the proof overview of \cite{KEG+24} for path graphs.
They also extended this method to interval graphs, although the construction of the chain of maximal allocations becomes significantly more involved. Indeed, as explained and formalized below, our main contribution is to give a construction of a chain for any graph.

\begin{figure}[htbp]
	\centering
	\includegraphics[width=0.5\textwidth]{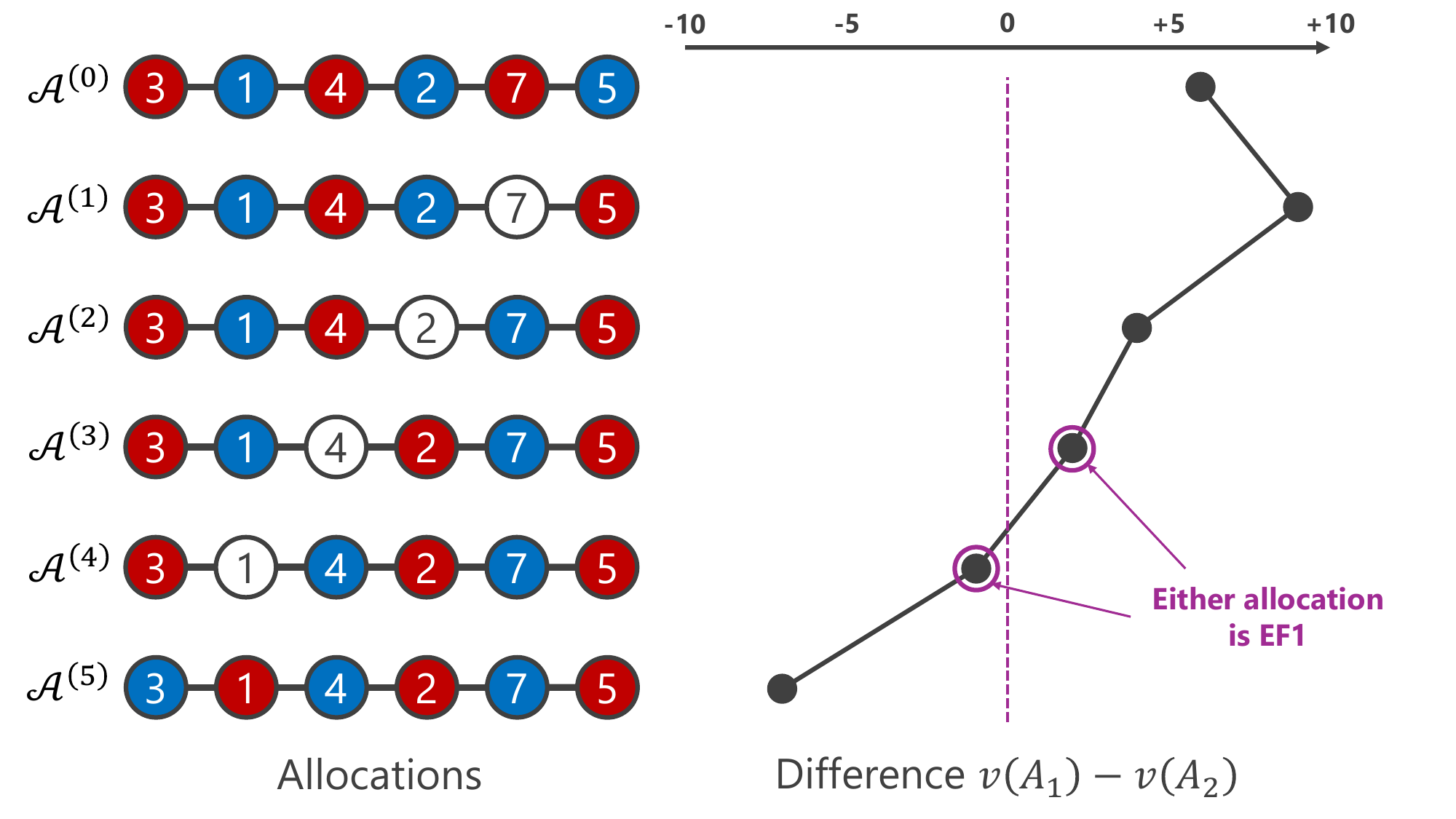}
	\caption{An example of chain of allocations for a path graph, $\mathcal{A}^{(0)}, \dots, \mathcal{A}^{(5)}$. The number written in each vertex is a valuation of the corresponding good. Red and blue vertices are those assigned to agent $1$ and $2$, respectively.}
	\label{fig:chain_pathgraph}
\end{figure}

\subsection{Useful Definitions and Lemmata}

Our construction will require several generalizations of definitions and lemmata from Kumar et al.~\cite{KEG+24}. We believe that these tools can be useful beyond the context of our work. Firstly, we use the following definition of ``adjacency''. Compared to \cite{KEG+24}, our definition (Definition~\ref{def:ordered}) is more relaxed, in that it does not place any requirement on $|A'_1 \setminus A_1|$ and $|A_2 \setminus A'_2|$ whereas Kumar et al.'s enforces that these are at most one. Such a relaxation is crucial since our ``chain'' constructed below violates the aforementioned Kumar et al.'s condition.

\begin{definition}\label{def:ordered}
A pair $(\cA, \cA')$ of allocations is \emph{\oadj} if $|A_1 \setminus A'_1| \leq 1$ and $|A'_2 \setminus A_2| \leq 1$.
\end{definition}

The following is the key lemma that enables to find an EF1 allocation at the point when $v(A_1) - v(A_2)$ crosses from positive to negative.

\begin{lemma}
Let  $(\cA, \cA')$ be any \oadj pair of allocations. Further, assume that the following conditions hold:
	\begin{enumerate}
		\item $v(A_1) \geq v(A_2)$
		\item $v(A'_1) \leq v(A'_2)$
	\end{enumerate}
	Then, at least one of $\cA$ or $\cA'$ must be EF1.
	\label{lem:chain_lemma1}
\end{lemma}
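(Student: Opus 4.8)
The plan is to contrapose: I will assume that neither $\cA$ nor $\cA'$ is EF1 and derive a contradiction. The first step is to observe that the two sign hypotheses, together with the fact that both agents share the valuation $v$, collapse each allocation's two-sided EF1 requirement into a single relevant direction. Indeed, in $\cA$ the inequality $v(A_1) \geq v(A_2)$ already makes agent $1$ envy-free towards agent $2$, so $\cA$ can fail EF1 only because agent $2$ envies agent $1$ beyond one good; symmetrically, $v(A'_1) \leq v(A'_2)$ makes agent $2$ envy-free towards agent $1$ in $\cA'$, so $\cA'$ can fail only because agent $1$ envies agent $2$ beyond one good. Writing out these two negated conditions gives: $A_1 \neq \emptyset$ with $v(A_1 \setminus \{g\}) > v(A_2)$ for every $g \in A_1$, and $A'_2 \neq \emptyset$ with $v(A'_2 \setminus \{h\}) > v(A'_1)$ for every $h \in A'_2$.

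The crux is to exploit the \oadj property to turn these strict inequalities into a contradiction via monotonicity. From $|A_1 \setminus A'_1| \leq 1$ I will extract a single good $g^\star \in A_1$ with $A_1 \setminus \{g^\star\} \subseteq A'_1$: if the difference has size one, take $g^\star$ to be its unique element; if it is empty, then $A_1 \subseteq A'_1$ and any $g^\star \in A_1$ works (such a good exists since $A_1 \neq \emptyset$). Monotonicity then yields $v(A_1 \setminus \{g^\star\}) \leq v(A'_1)$, and combined with the first negated condition $v(A_1 \setminus \{g^\star\}) > v(A_2)$ this gives $v(A_2) < v(A'_1)$. An entirely symmetric argument using $|A'_2 \setminus A_2| \leq 1$ produces a good $h^\star \in A'_2$ with $A'_2 \setminus \{h^\star\} \subseteq A_2$, whence $v(A'_2 \setminus \{h^\star\}) \leq v(A_2)$ and, with the second negated condition, $v(A'_1) < v(A_2)$. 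The conclusions $v(A_2) < v(A'_1)$ and $v(A'_1) < v(A_2)$ are contradictory, so at least one of $\cA$ or $\cA'$ is EF1.

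The step I expect to require the most care is the extraction of the good to be removed. The relaxed Definition~\ref{def:ordered} only bounds $|A_1 \setminus A'_1|$ and $|A'_2 \setminus A_2|$ by one, so I must argue uniformly across the size-zero and size-one cases, and in the size-zero case I must invoke the non-emptiness furnished by the negated EF1 condition to guarantee that a removable good is available. Note that only monotonicity of $v$ is used---never additivity---and every inequality is an instance of $v$ evaluated on subsets related by inclusion, which is precisely what monotonicity controls; hence the argument applies verbatim to general monotone identical valuations.
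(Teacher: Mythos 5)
Your proof is correct and uses essentially the same ingredients as the paper's: the sign hypotheses rule out one direction of envy in each allocation, and the \oadj bounds plus monotonicity link $v(A_1\setminus\{g^\star\})$ to $v(A'_1)$ and $v(A'_2\setminus\{h^\star\})$ to $v(A_2)$. The only cosmetic difference is that you run it as a contradiction from both allocations failing EF1, whereas the paper directly shows that if $\cA$ is not EF1 then the chain $v(A'_1) \geq v(A_1 \cap A'_1) \geq v(A_2) \geq v(A'_2 \cap A_2)$ forces $\cA'$ to be EF1.
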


\begin{proof}
Suppose that the allocation $\cA$ is not EF1.
Since $v(A_1) \geq v(A_2)$, agent $a_2$ envies agent $a_1$ even if one good is removed from $A_1$. Since $|A_1 \setminus A'_1| \leq 1$, we must have $v(A_1 \cap A'_1) = v(A_1 \setminus (A_1 \setminus A'_1)) \geq v(A_2)$. As a result,
    \begin{align}
        v(A'_1) &\geq v(A_1 \cap A'_1) \nonumber \\ &\geq v(A_2) \nonumber \\
        &\geq v(A'_2 \cap A_2) = v(A'_2 \setminus (A'_2 \setminus A_2)), \label{eq:non-envy-oadj-claim}
    \end{align}
where 
the first and third inequalities hold because $v$ is monotone. Now, consider the allocation $\cA'$:
    \begin{itemize}
        \item Since $|A'_2 \setminus A_2| \leq 1$, \eqref{eq:non-envy-oadj-claim} implies that agent $1$ does not envy agent $2$ after removing a good from $A'_2$.
        \item Agent $2$ does not envy agent $1$ since $v(A'_1) \leq v(A'_2)$.
    \end{itemize}
    Therefore, this allocation is EF1.
\end{proof}

We generalize Kumar et al.'s method by defining a \textbf{gapless chain}, and prove that any gapless chain must contain an EF1 allocation. 
We stress again that our requirements are weaker than the chain used in Kumar et al.'s\footnote{Namely, we use a weaker adjacency notion and we only require the sign flip (first two conditions) whereas Kumar et al. require that $(A^{(0)}_1, A^{(0)}_2) = (A^{(k)}_2, A^{(k)}_1)$.}, but still suffices to ensure EF1 for identical valuations. 

\begin{definition}
A sequence of allocation $\cA^{(0)}, \dots, \cA^{(k)}$ is a \textbf{gapless chain} if it satisfies the following conditions:
	\begin{enumerate}
		\item $v(A^{(0)}_1) \geq v(A^{(0)}_2)$.
		\item $v(A^{(k)}_1) \leq v(A^{(k)}_2)$.
            \item $(\cA^{(i - 1)}, \cA^{(i)})$ is \oadj for every $i \in [k]$.
	\end{enumerate}
\end{definition}

\begin{lemma}
If $\cA^{(0)}, \dots, \cA^{(k)}$ is a gapless chain, there exists an $i \in \{0, \dots, k\}$ that $\cA^{(i)}$ is EF1.
\label{lem:chain_lemma2}
\end{lemma}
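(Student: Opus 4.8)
The plan is to prove Lemma~\ref{lem:chain_lemma2} by a discrete intermediate value argument, using Lemma~\ref{lem:chain_lemma1} as the engine that converts a sign flip between adjacent allocations into an EF1 certificate. The key observation is that the function $f(i) := v(A^{(i)}_1) - v(A^{(i)}_2)$ is defined at each index $i \in \{0,\dots,k\}$, and by the first two conditions of the gapless chain we have $f(0) \geq 0$ and $f(k) \leq 0$.

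First I would consider the trivial cases. If $f(0) \leq 0$ as well (i.e.\ $v(A^{(0)}_1) \leq v(A^{(0)}_2)$), then $\cA^{(0)}$ already satisfies both hypotheses of Lemma~\ref{lem:chain_lemma1} when paired with itself—or more cleanly, one checks directly that $v(A^{(0)}_1) = v(A^{(0)}_2)$ forces no envy either way, so $\cA^{(0)}$ is EF1 and we are done. Symmetrically, if $f(k) \geq 0$ we may take $i = k$. So the interesting regime is $f(0) > 0$ and $f(k) < 0$.

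In that regime I would locate the \emph{crossing point}: let $i$ be the smallest index in $\{1,\dots,k\}$ such that $f(i) \leq 0$. Such an $i$ exists because $f(k) \leq 0$, and by minimality $f(i-1) > 0$, i.e.\ $v(A^{(i-1)}_1) > v(A^{(i-1)}_2)$ and $v(A^{(i)}_1) \leq v(A^{(i)}_2)$. By the third condition of the gapless chain, $(\cA^{(i-1)}, \cA^{(i)})$ is \oadj. These are precisely the three hypotheses of Lemma~\ref{lem:chain_lemma1} applied to the pair $(\cA^{(i-1)}, \cA^{(i)})$, so that lemma guarantees at least one of $\cA^{(i-1)}$ or $\cA^{(i)}$ is EF1, completing the proof.

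I do not expect a genuine obstacle here, since the heavy lifting—turning a local sign change across an \oadj pair into an EF1 allocation—has already been isolated into Lemma~\ref{lem:chain_lemma1}. The only care needed is bookkeeping at the boundaries: making sure the edge cases where the sequence never strictly crosses zero (or is already non-positive at index $0$) are handled, which is why I would phrase the argument via the smallest index with $f(i) \leq 0$ rather than assuming a strict sign flip. This is the standard discrete analogue of the intermediate value theorem and should be only a few lines.
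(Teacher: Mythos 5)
Your proposal is correct and follows essentially the same route as the paper: both locate an index where the sign of $v(A^{(i)}_1) - v(A^{(i)}_2)$ flips from nonnegative to nonpositive across an \oadj pair and then invoke \Cref{lem:chain_lemma1}. The paper states the existence of such an $i \in [k]$ in one line where you spell out the minimal-crossing-index bookkeeping; the extra edge-case discussion is harmless but not needed.
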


\begin{proof}
From the first two conditions, there exists $i \in [k]$ that $v(A^{(i-1)}_1) \geq v(A^{(i-1)}_2)$ and $v(A^{(i)}_1) \leq v(A^{(i)}_2)$. \Cref{lem:chain_lemma1} then implies that at least one of $\cA^{(i - 1)}$ or $\cA^{(i)}$ is EF1.
\end{proof}

Given \Cref{lem:chain_lemma2}, our main task is thus to (efficiently) construct a gapless chain of maximal allocations (for any graph $G$). We devote the remainder of this section to this task.

\subsection{Proof of Existence}

\begin{algorithm}
\caption{\chainalg$(S; G = (M, E), v)$}
\label{alg:ef1-chain}
\begin{algorithmic}[1]
\Require $S = \{s_1, \dots, s_k\}$ is a maximal independent of $G$.
\For{$t \in M \setminus S$}
\State $\Gamma_t := \{i \in [k] \mid \{s_i, t\} \in E\}$ \vfill \Comment{Non-empty since $S$ is a maximal independent set}
\State $p_t = \min_{i \in \Gamma_t} i$
\State $q_t = \max_{i \in \Gamma_t} i$
\EndFor
\State $X_1 \gets \emptyset$
\For{$t \in M \setminus S$ in increasing order of $q_t$}
\If{$t$ has no neighbor in $X_1$}
\State $X_1 \gets X_1 \cup \{t\}$
\EndIf
\EndFor
\State $X_2 \gets \emptyset$
\For{$t \in M \setminus S$ in decreasing order of $p_t$}
\If{$t$ has no neighbor in $X_2$}
\State $X_2 \gets X_2 \cup \{t\}$
\EndIf
\EndFor
\For{$i = 0, \dots, k$}
\State $A^{(i)}_1 = \{s_{i + 1}, \dots, s_k\}  \cup \{t \in X_1 \mid q_t \leq i\}$
\State $A^{(i)}_2 =  \{s_1, \dots, s_i\} \cup \{t \in X_2 \mid p_t > i\}$
\If{$\cA^{(i)} = (A^{(i)}_1, A^{(i)}_2)$ is EF1}
\State \Return $\cA^{(i)}$
\EndIf
\EndFor
\State \Return NULL
\end{algorithmic}
\end{algorithm}

Now, we prove that, a maximal EF1 allocation always exists for two agents, as stated below.

\begin{theorem}
For $n = 2$ agents, any graph $G$ and any monotone valuation $v$, there exists a maximal EF1 allocation. \label{thm:twoagents_main}
\end{theorem}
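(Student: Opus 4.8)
The plan is to show that Algorithm~\ref{alg:ef1-chain}, run on any maximal independent set $S=\{s_1,\dots,s_k\}$ of $G$, produces a gapless chain $\cA^{(0)},\dots,\cA^{(k)}$ of maximal allocations; by \Cref{lem:chain_lemma2} this chain must contain an EF1 allocation, which the algorithm returns. Thus the entire burden is to verify three things about the sequence defined in the final loop: that each $\cA^{(i)}$ is a valid allocation (both bundles are independent sets), that each $\cA^{(i)}$ is maximal, and that the sequence satisfies the three conditions in the definition of a gapless chain.

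First I would check feasibility. The bundle $A_1^{(i)}$ consists of a suffix $\{s_{i+1},\dots,s_k\}$ of $S$ together with the greedily chosen set $\{t\in X_1 \mid q_t\le i\}$. Since $S$ is independent and each $t\in X_1$ was added only when it had no neighbor already in $X_1$, I expect independence of $X_1$ to follow directly from the greedy construction; the genuine point to verify is that no $t$ with $q_t\le i$ conflicts with any retained $s_j$ for $j>i$. By definition $q_t=\max_{j\in\Gamma_t} j$, so every neighbor $s_j$ of $t$ in $S$ has index $j\le q_t\le i$, hence lies in the prefix given to agent~$2$, not in $A_1^{(i)}$. The symmetric argument using $p_t=\min_{j\in\Gamma_t} j$ handles $A_2^{(i)}$. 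I would also confirm disjointness of $A_1^{(i)}$ and $A_2^{(i)}$: the $S$-parts are a disjoint suffix and prefix, and a good $t\notin S$ can be in both only if $t\in X_1\cap X_2$ with $q_t\le i<p_t$, which is impossible since $p_t\le q_t$.

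Next I would verify the three gapless-chain conditions. Conditions~1 and~2 (the sign flip) come from the endpoints: at $i=0$, agent~$1$ receives all of $S$ plus possibly some extra goods while agent~$2$ receives only a subset of $M\setminus S$, so monotonicity gives $v(A_1^{(0)})\ge v(A_1^{(0)}\cap$-part$)\ge v(A_2^{(0)})$; more carefully, $A_2^{(0)}\subseteq M\setminus S$ is an independent set, and one argues it embeds (value-wise) into $A_1^{(0)}\supseteq S$. At $i=k$ the roles reverse by symmetry. For condition~3 I must show $(\cA^{(i-1)},\cA^{(i)})$ is \oadj, i.e.\ $|A_1^{(i-1)}\setminus A_1^{(i)}|\le 1$ and $|A_2^{(i)}\setminus A_2^{(i-1)}|\le 1$. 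Incrementing $i$ to $i+1$ removes exactly the single element $s_i$ from agent~$1$'s $S$-suffix and may add some goods $t$ with $q_t=i$ to agent~$1$; so $A_1^{(i-1)}\setminus A_1^{(i)}$ consists only of $s_{i-1}$ (the $X_1$-part only grows), giving cardinality at most~$1$. Symmetrically $A_2^{(i)}\setminus A_2^{(i-1)}$ consists only of $s_i$.

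The main obstacle I anticipate is maximality, since that is where the careful design of $X_1$ and $X_2$ via the orderings by $q_t$ and $p_t$ must pay off. Fix $i$ and an unallocated good $t\in M\setminus(A_1^{(i)}\cup A_2^{(i)})$; I must exhibit a neighbor of $t$ in \emph{each} bundle. If $t\notin X_1$, then when the first greedy loop considered $t$ it already had a neighbor $t'\in X_1$ with $q_{t'}\le q_t$, and I would argue this forces a neighbor inside $A_1^{(i)}$, splitting into cases according to whether $q_t\le i$; the boundary case where $t$'s only conflicts are with retained $S$-vertices must be handled using $q_t$ versus $i$. The symmetric analysis with $X_2$, $p_t$, and the prefix $\{s_1,\dots,s_i\}$ gives the neighbor in $A_2^{(i)}$. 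Pinning down exactly why the greedy choice guarantees a retained neighbor on \emph{both} sides simultaneously—rather than merely in $X_1\cup X_2$—is the delicate part, and I would expect to prove a short auxiliary claim that any good excluded from $X_1$ (resp.\ $X_2$) has a neighbor in $\{t'\in X_1\mid q_{t'}\le i\}\cup\{s_{i+1},\dots,s_k\}$ (resp.\ in the mirror set) for every relevant $i$.
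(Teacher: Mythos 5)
There is a genuine gap, and it sits exactly where you wave your hands: the sign-flip conditions of the gapless chain. You propose to run \chainalg{} on \emph{any} maximal independent set $S$, and to justify $v(A_1^{(0)})\ge v(A_2^{(0)})$ you assert that the independent set $A_2^{(0)}=X_2\subseteq M\setminus S$ ``embeds (value-wise)'' into $S$. That is false for an arbitrary maximal independent set. Take two disjoint edges $\{a,b\}$ and $\{c,d\}$ with additive values $v(a)=v(c)=1$, $v(b)=v(d)=100$; then $S=\{a,c\}$ is maximal but $X_2=\{b,d\}$ has $v(X_2)=200>2=v(S)$, so condition~1 of the gapless-chain definition fails at $\cA^{(0)}=(S,X_2)$, and symmetrically condition~2 fails at $\cA^{(k)}=(X_1,S)$. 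The paper avoids this by choosing $S$ to be a maximal independent set \emph{maximizing} $v(S)$: since $X_1$ and $X_2$ are independent sets, each extends to a maximal independent set, and monotonicity then gives $v(X_1),v(X_2)\le v(S)$, which is precisely the hypothesis under which the chain is gapless (the paper isolates this as a separate lemma). Indeed, the paper's second algorithm exists precisely because \chainalg{} can return NULL on a badly chosen $S$, so the claim you are trying to prove for arbitrary $S$ is not merely unproven but not the right statement.

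The remainder of your outline --- validity of each $\cA^{(i)}$, the ordered-adjacency computation $A_1^{(i-1)}\setminus A_1^{(i)}=\{s_i\}$ (you write $s_{i-1}$ at one point, an off-by-one slip), and the maximality argument via the greedy construction of $X_1$ and $X_2$ --- matches the paper's proof, and you correctly identify the auxiliary claim needed for maximality (a good excluded from $X_1$ has a neighbor $t'\in X_1$ with $q_{t'}\le q_t$). But without the value-maximizing choice of $S$, the chain you build need not satisfy the endpoint conditions, so \Cref{lem:chain_lemma2} cannot be invoked and the existence claim does not follow.
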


Although we only claim the existence in the above theorem, we will in fact present an algorithm for finding such an allocation, as its running time will be discussed in the next section. Our algorithm is presented in \Cref{alg:ef1-chain} where the input $S$ can be any maximal independent set of $G$. To prove \Cref{thm:twoagents_main}, we will use one that maximizes $v(S)$. 
In Figure~\ref{fig:chain_somegraph}, we provide an example of a chain constructed by the algorithm.

\begin{figure}[htbp]
	\centering
	\includegraphics[width=0.45\textwidth]{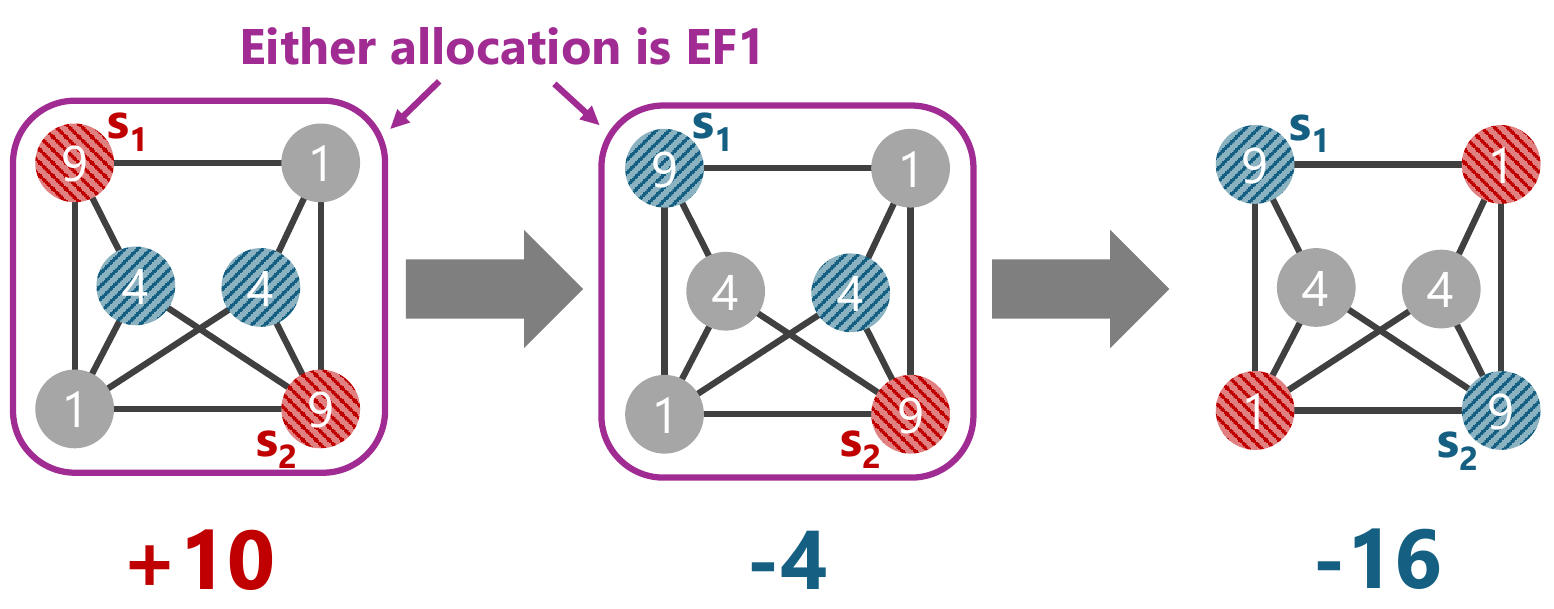}
	\caption{An example of a gapless chain of allocations, constructed by Algorithm~\ref{alg:ef1-chain}. The number written in each vertex is a valuation of the corresponding good. Red and blue vertices are those assigned to agent $1$ and $2$, respectively.
    }
	\label{fig:chain_somegraph}
\end{figure}

We now prove a couple of crucial lemmata. Starting with the fact that each allocation is valid and maximal:

\begin{lemma} \label{lem:alg-maximal}
For every $i \in \{0, \dots, k\}$, $\cA^{(i)}$ is a valid maximal allocation.
\end{lemma}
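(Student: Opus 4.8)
# Proof Proposal for Lemma \ref{lem:alg-maximal}

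The plan is to verify three things for each $i$: that $A^{(i)}_1$ and $A^{(i)}_2$ are each independent sets, that they are disjoint, and that the allocation $\cA^{(i)}$ is maximal. I would handle validity and maximality separately, since they rely on different structural features of the construction. Throughout, I will use the fact that the goods $s_1,\dots,s_k$ form a maximal independent set $S$, and that $X_1, X_2 \subseteq M \setminus S$ are themselves independent sets by the greedy loops that construct them (each $t$ is added only if it has no neighbor already present).

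First I would prove \textbf{independence} of each bundle. For $A^{(i)}_1 = \{s_{i+1},\dots,s_k\} \cup \{t \in X_1 \mid q_t \leq i\}$, the two pieces are internally independent (the $s_j$'s because $S$ is independent, the $X_1$-part because $X_1$ is independent). The crux is showing there is no edge between the two pieces. Here I would use the key definition $q_t = \max_{i \in \Gamma_t} i$, the largest index of an $S$-neighbor of $t$. If $t \in A^{(i)}_1$ then $q_t \leq i$, so every $S$-neighbor $s_j$ of $t$ has $j \leq q_t \leq i$, meaning $s_j \notin \{s_{i+1},\dots,s_k\}$; hence $t$ has no neighbor among the $s$-goods in $A^{(i)}_1$. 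The symmetric argument for $A^{(i)}_2$ uses $p_t = \min_{i \in \Gamma_t} i$: if $t \in A^{(i)}_2$ then $p_t > i$, so every $S$-neighbor $s_j$ of $t$ has $j \geq p_t > i$, hence $s_j \notin \{s_1,\dots,s_i\}$. This is the step I expect to be the main obstacle, because it is exactly where the asymmetric choice of $q_t$ for $X_1$ and $p_t$ for $X_2$ has to pay off; getting the inequalities ($\leq i$ versus $> i$) to align correctly with the index ranges is the delicate part.

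Next I would check \textbf{disjointness}, $A^{(i)}_1 \cap A^{(i)}_2 = \emptyset$. The $s$-goods cannot overlap since $\{s_{i+1},\dots,s_k\}$ and $\{s_1,\dots,s_i\}$ are disjoint index ranges. No $s$-good lies in the $X$-parts since $X_1, X_2 \subseteq M \setminus S$. The remaining case is a good $t$ lying in both $X_1$-part and $X_2$-part; this would require $q_t \leq i$ and $p_t > i$ simultaneously, i.e. $p_t > q_t$, contradicting $p_t = \min_{j \in \Gamma_t} j \leq \max_{j \in \Gamma_t} j = q_t$ (note $\Gamma_t \neq \emptyset$ by maximality of $S$). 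Thus the bundles are disjoint.

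Finally, for \textbf{maximality}, I must show every unallocated good $t$ is adjacent to some good in each bundle. The unallocated goods are those $t \in M \setminus S$ with $t \notin A^{(i)}_1 \cup A^{(i)}_2$. Since $S \subseteq A^{(i)}_1 \cup A^{(i)}_2$ (every $s_j$ is in exactly one bundle depending on whether $j \leq i$), and since $S$ is a \emph{maximal} independent set, $t$ has a neighbor in $S$; I would then argue this neighbor lies in the appropriate bundle, or else that $t$ was blocked during the greedy construction of $X_1$ (resp. $X_2$) by a good that is present in $A^{(i)}_1$ (resp. $A^{(i)}_2$). Concretely, the neighbor $s_{q_t}$ satisfies $q_t \leq i$ exactly when $s_{q_t} \notin A^{(i)}_1$, so I would split on the relation of $q_t$ and $p_t$ to $i$ to locate a blocking neighbor in each bundle, appealing to the greedy admission rule to supply the neighbor from $X_1$ or $X_2$ when the relevant $s$-neighbor has been placed in the other bundle.
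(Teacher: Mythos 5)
Your proposal is correct and follows essentially the same route as the paper's proof: verify disjointness via $p_t \leq q_t$, verify independence within each bundle by combining the independence of $S$ and of $X_1, X_2$ with the defining inequalities $q_t \leq i$ (resp.\ $p_t > i$), and establish maximality by a case split on how $p_g, q_g$ compare to $i$. The only detail left implicit in your maximality sketch, which the paper makes explicit, is that because $X_1$ is built in increasing order of $q_t$, the good that blocked $g$ from entering $X_1$ has $q_t \leq q_g \leq i$ and hence actually lies in $A^{(i)}_1$ (and symmetrically for $X_2$); your appeal to the ``greedy admission rule'' is exactly this point.
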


\begin{proof}
First, we prove that the allocation is valid:
\begin{itemize}
\item \textbf{No good is assigned to both agents.} This is obvious for goods in $S$. For good $t \in M \setminus S$, since $q_t \geq p_t$, at most one of the conditions $q_t \leq i$ or $p_t > i$ can hold; thus, it is assigned to at most one agent.
\item \textbf{No adjacent goods are assigned to the same agent.} Consider two goods $g, g' \ (g \neq g')$ that are assigned to agent $1$ and consider the following cases:
\begin{enumerate}
\item Both goods are in $S$. They are not adjacent because $S$ is an independent set.
\item Both goods are in $M \setminus S$. They are not adjacent because $g, g' \in X_1$ and $X_1$ is an independent set.
\item One good is from $S$ and the other is from $M \setminus S$. Assume w.l.o.g. that $g \in S$ and $g' \in M \setminus S$. From this, we must have $g \in \{s_{i + 1}, \dots, s_{k}\}$ and that $q_{g'} \leq i$. By the definition of $q_{g'}$, this ensures that $g, g'$ are not adjacent. 
\end{enumerate}
\end{itemize}

Next, we prove maximality of $\cA^{(i)}$, i.e., that no good in $M \setminus (A_1^{(i)} \cup A_2^{(i)})$ can be assigned to one of the agents. Let $g \in M \setminus (A_1^{(i)} \cup A_2^{(i)})$ be any unassigned good. Consider three following cases:
\begin{enumerate}
\item $q_g \leq i$. Since $g$ is adjacent to $s_{q_g} \in A^{(i)}_2$, $g$ cannot be assigned to agent $2$. Moreover, since $g \notin A^{(i)}_1$, it must be that $g \notin X_1$. From how $X_1$ is constructed, there exists $t \in X_1$ such that $q_t \leq q_g$ and $t$ is adjacent to $g$. As $t \in A^{(i)}_1$, $g$ cannot be assigned to agent $1$.
\item $p_g > i$. This case follows from an analogous argument to the first case.
\item $p_g \leq i < q_g$. In this case, $g$ is adjacent to $s_{p_g} \in A^{(i)}_2$ and $s_{q_g} \in A^{(i)}_1$. Thus, $t_j$ cannot be assigned. \qedhere
\end{enumerate}
\end{proof}

Next, we show that $\cA^{(0)}, \dots, \cA^{(k)}$ is a gapless chain. However, this requires an additional requirement that $v(S)$ is no smaller than $v(X_1), v(X_2)$.

\begin{lemma} \label{lem:alg-chain}
If $v(S) \geq v(X_1)$ and $v(S) \geq v(X_2)$, then $\cA^{(0)}, \dots, \cA^{(k)}$ is a gapless chain.
\end{lemma}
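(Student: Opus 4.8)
The plan is to verify the three defining conditions of a gapless chain in turn, with the first two being essentially immediate from the hypotheses and the third requiring a careful bookkeeping of how consecutive bundles differ.

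First, I would evaluate the two endpoint allocations. At $i=0$, every good $t \in M \setminus S$ has $q_t \geq 1 > 0$, so $\{t \in X_1 \mid q_t \leq 0\} = \emptyset$ and $\{t \in X_2 \mid p_t > 0\} = X_2$; hence $A^{(0)}_1 = S$ and $A^{(0)}_2 = X_2$. Symmetrically, at $i=k$ every good has $q_t \leq k$ and $p_t \leq k$, giving $A^{(k)}_1 = X_1$ and $A^{(k)}_2 = S$. The first condition $v(A^{(0)}_1) \geq v(A^{(0)}_2)$ then reads exactly as $v(S) \geq v(X_2)$, and the second condition $v(A^{(k)}_1) \leq v(A^{(k)}_2)$ reads exactly as $v(X_1) \leq v(S)$; both are assumed in the statement.

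The substance is the third condition: that $(\cA^{(i-1)}, \cA^{(i)})$ is \oadj for each $i \in [k]$, i.e.\ $|A^{(i-1)}_1 \setminus A^{(i)}_1| \leq 1$ and $|A^{(i)}_2 \setminus A^{(i-1)}_2| \leq 1$. I would fix $i$ and compute each set difference directly from the closed-form definitions of the bundles. For agent $1$, the $S$-part of $A^{(i-1)}_1$, namely $\{s_{i}, \dots, s_k\}$, shrinks to $\{s_{i+1}, \dots, s_k\}$ in $A^{(i)}_1$, losing only $s_i$, while the threshold set $\{t \in X_1 \mid q_t \leq i-1\}$ is contained in $\{t \in X_1 \mid q_t \leq i\}$ and therefore only grows. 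Since $s_i \in S$ is disjoint from $X_1 \subseteq M \setminus S$, nothing in the $X_1$-part cancels this loss, so $A^{(i-1)}_1 \setminus A^{(i)}_1 = \{s_i\}$, a single element. Symmetrically, for agent $2$ the $S$-part $\{s_1, \dots, s_{i-1}\}$ gains exactly $s_i$ while the threshold set $\{t \in X_2 \mid p_t > i-1\}$ only shrinks, so $A^{(i)}_2 \setminus A^{(i-1)}_2 = \{s_i\}$. Both differences have size $1$, which establishes that the pair is \oadj.

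I expect no genuine obstacle here, as the result is a direct unwinding of the definitions; the one point demanding care is the \emph{direction} of the monotonicity. One must observe that $\{t \in X_1 \mid q_t \leq i\}$ is nondecreasing in $i$ whereas $\{t \in X_2 \mid p_t > i\}$ is nonincreasing, so that the $X_1$- and $X_2$-parts contribute nothing to the two relevant set differences and the entire change between consecutive allocations is carried by the single pivot good $s_i$. This also explains why the relaxed adjacency notion of \Cref{def:ordered}, which bounds only $|A_1 \setminus A'_1|$ and $|A'_2 \setminus A_2|$ and leaves the reverse differences (which may be large, due to the growing/shrinking threshold sets) unconstrained, is the right notion for this chain.
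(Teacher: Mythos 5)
Your proof is correct and follows essentially the same route as the paper's: identify the endpoints $\cA^{(0)} = (S, X_2)$ and $\cA^{(k)} = (X_1, S)$ to get the first two conditions from the hypotheses, then show $A^{(i-1)}_1 \setminus A^{(i)}_1 = \{s_i\}$ and $A^{(i)}_2 \setminus A^{(i-1)}_2 = \{s_i\}$ for the adjacency condition. Your version just spells out more explicitly the monotonicity of the threshold sets, which the paper leaves implicit.
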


\begin{proof}
Note that $\cA^{(0)} = (S, X_2)$ and $\cA^{(k)} = (X_1, S)$. Thus, the first two conditions are satisfied by our assumption $v(S) \geq v(X_1), v(S) \geq v(X_2)$. Finally, for every $i \in [k]$, we can see that $A_1^{(i-1)} \setminus A_1^{(i)} = \left\{s_i\right\}$ and $A_2^{(i)} \setminus A_2^{(i-1)} = \left\{s_i\right\}$. Thus, $(\cA^{(i - 1)}, \cA^{(i)})$ is \oadj.
\end{proof}

\Cref{lem:chain_lemma2,lem:alg-maximal,lem:alg-chain} together immediately imply the following:

\begin{lemma} \label{lem:large-value-then-ef1}
If $v(S) \geq v(X_1)$ and $v(S) \geq v(X_2)$, then \Cref{alg:ef1-chain} outputs a maximal EF1 allocation.
\end{lemma}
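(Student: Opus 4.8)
The plan is to verify that the three hypotheses of the supporting lemmata are satisfied, so that they can be chained together. Since the statement is an immediate corollary of \Cref{lem:chain_lemma2}, \Cref{lem:alg-maximal}, and \Cref{lem:alg-chain}, the proof is essentially a matter of assembling these pieces in the right order under the stated assumption $v(S) \geq v(X_1)$ and $v(S) \geq v(X_2)$.

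First I would invoke \Cref{lem:alg-maximal} to conclude that every allocation $\cA^{(i)}$ produced in the final loop of \Cref{alg:ef1-chain} is a valid maximal allocation; this holds unconditionally, so no assumption is needed here. Next, under the hypothesis $v(S) \geq v(X_1)$ and $v(S) \geq v(X_2)$, I would apply \Cref{lem:alg-chain} to deduce that the sequence $\cA^{(0)}, \dots, \cA^{(k)}$ forms a gapless chain. Then \Cref{lem:chain_lemma2} guarantees the existence of some index $i \in \{0, \dots, k\}$ for which $\cA^{(i)}$ is EF1.

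The final step I would take is to connect this existential guarantee to the actual output of the algorithm. \Cref{lem:chain_lemma2} only asserts that \emph{some} $\cA^{(i)}$ is EF1, whereas the algorithm iterates over $i = 0, \dots, k$ and returns the \emph{first} EF1 allocation it encounters (returning \texttt{NULL} only if none is found). Since such an EF1 index provably exists, the loop cannot fall through to the \texttt{NULL} return; hence the algorithm returns some $\cA^{(i)}$ that is simultaneously EF1 (by the check in the loop) and maximal (by \Cref{lem:alg-maximal}). This yields the claimed maximal EF1 output.

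I do not anticipate any serious obstacle, as the heavy lifting has already been done in the preceding lemmata; the only point requiring a moment of care is the last one, namely arguing that the algorithm's greedy ``return the first EF1 allocation'' behavior correctly realizes the nonconstructive existence statement of \Cref{lem:chain_lemma2} rather than returning \texttt{NULL}.
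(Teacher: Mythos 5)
Your proposal is correct and follows exactly the route the paper takes: the paper states that \Cref{lem:chain_lemma2,lem:alg-maximal,lem:alg-chain} ``together immediately imply'' the result, and your assembly of these three lemmata (maximality unconditionally, gapless chain under the hypothesis, existence of an EF1 index, and hence the loop cannot fall through to \texttt{NULL}) is precisely that argument spelled out.
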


Our main theorem of this section (\Cref{thm:twoagents_main}) now then follows easily from \Cref{lem:large-value-then-ef1} by choosing an appropriate $S$.

\begin{proof}[Proof of \Cref{thm:twoagents_main}]
Let $S$ be a maximal independent set of $G$ that maximizes $v(S)$. Since $X_1, X_2$ are independent set and $v$ is monotone, we have $v(S) \geq v(X_1)$ and $v(S) \geq v(X_2)$. Thus, \Cref{lem:large-value-then-ef1} ensures that running \Cref{alg:ef1-chain} on input $S$ yields a maximal EF1 allocation.
\end{proof}

\subsection{Algorithm}

Recall in the proof of \Cref{thm:twoagents_main} that we pick $S$ to be a maximal independent set with largest valuation. Doing so trivially would require enumerating through all $2^m$ subsets of $M$. In this section, we give a simple algorithm (\Cref{alg:swap}) that significantly improves upon this running time. In particular, it runs in polynomial-time for additive valuations and pseudo-polynomial time for general monotone valuations.

\begin{algorithm}
\caption{\fullalg$(G = (M, E), v)$}
\label{alg:swap}
\begin{algorithmic}[1]
\State $g^* \gets \argmax_{g \in M} v(g)$
\State $S^i \gets$ any maximal independent set of $G$ containing $g^*$
\State $i \gets 0$
\While{True}
\If{$\chainalg(S^i; G, v) \ne$ NULL}
\State \Return $\chainalg(S^0; G, v)$
\EndIf
\State $X^i_1, X^i_2 \gets X_1, X_2$ in $\chainalg(S^i; G, v)$
\State $\ell \gets \argmax_{\ell' \in \{1, 2\}} v(X^i_{\ell'})$
\State $S^{i+1} \gets$ any maximal independent set containing $X^i_{\ell}$
\State $i \gets i + 1$
\EndWhile
\end{algorithmic}
\end{algorithm}

Our algorithm running time is stated below in \Cref{thm:generic-alg}. Note that in the second case $B$ is the number of different values the valuation function can take. This implies that, if the valuations $v(S)$ are all integers, the running time is pseudo-polynomial.

\begin{theorem} \label{thm:generic-alg}
When there are $n = 2$ agents, there exists an algorithm that can find  a maximal EF1 allocation and its running time is as follows:
\begin{itemize}
\item $O(3^{m/3} \cdot (m \cdot T(m) + |E|))$ for monotone valuations,
\item $O(B \cdot (m \cdot T(m) + |E|))$ for monotone valuations such that the number of distinct values of $v(S)$ is at most $B$ (i.e. $|\{v(S) \mid S \subseteq M\}| \leq B$), and,
\item $O(m \log m \cdot (m \cdot T(m) + |E|))$ for additive valuations. 
\end{itemize}
\end{theorem}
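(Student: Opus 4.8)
The plan is to analyze \fullalg (\Cref{alg:swap}), which repeatedly invokes \chainalg (\Cref{alg:ef1-chain}) on a sequence of maximal independent sets $S^0, S^1, \dots$ of strictly increasing value. By the cut-and-choose reduction (\Cref{thm:cutchoose}) it suffices to bound the running time for identical valuations, since agent~$2$'s final choice adds only $2T(m)$. The total cost then factors as (number of outer iterations) $\times$ (cost of one \chainalg call), so I would establish (i) a uniform per-call bound of $O(m\cdot T(m) + |E|)$ and (ii) three separate bounds on the number of iterations, one for each regime.

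For the per-call bound, note that the inner loop of \chainalg as written tests all $k+1$ allocations $\cA^{(0)},\dots,\cA^{(k)}$ for EF1, which costs $\Theta(m)$ EF1-tests and hence $\Theta(m^2\cdot T(m))$. I would instead use \Cref{lem:chain_lemma1,lem:chain_lemma2} to shortcut this. First compute $X_1,X_2$ and the three values $v(S),v(X_1),v(X_2)$ in $O(|E|+m\log m+T(m))$ time: the scan producing $\Gamma_t,p_t,q_t$ and the two greedy sweeps (each maintaining a ``blocked'' flag) cost $O(|E|)$, and the two orderings cost $O(m\log m)$. If $v(S)\ge v(X_1)$ and $v(S)\ge v(X_2)$, then by \Cref{lem:alg-chain} the sequence is a gapless chain, so $f(i):=v(A^{(i)}_1)-v(A^{(i)}_2)$ satisfies $f(0)\ge 0\ge f(k)$. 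A consecutive sign change can be located by binary search even though $f$ need not be monotone (maintain $\mathrm{lo}<\mathrm{hi}$ with $f(\mathrm{lo})\ge 0\ge f(\mathrm{hi})$ and bisect), using $O(\log m)$ evaluations of $f$, each costing $O(m+T(m))$. \Cref{lem:chain_lemma1} then guarantees one of $\cA^{(i-1)},\cA^{(i)}$ is EF1, so only two EF1-tests (each $O(m\cdot T(m))$ for monotone $v$) are needed; otherwise the call returns NULL. This gives a per-call cost of $O(m\cdot T(m)+|E|)$, absorbing the $m\log m$ term.

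Next I would handle termination and count iterations via a value monovariant. Whenever \chainalg returns NULL, the contrapositive of \Cref{lem:large-value-then-ef1} gives $v(S^i)<\max(v(X^i_1),v(X^i_2))=v(X^i_\ell)$; since $S^{i+1}\supseteq X^i_\ell$ and $v$ is monotone, $v(S^{i+1})\ge v(X^i_\ell)>v(S^i)$, so the values strictly increase and each $S^i$ is a distinct maximal independent set. For general monotone valuations the number of maximal independent sets, hence iterations, is at most $3^{m/3}$ by the Moon--Moser bound, yielding the first regime. When $v$ takes at most $B$ distinct values on subsets, the strictly increasing chain $v(S^0)<v(S^1)<\cdots$ has length at most $B$, yielding the second regime.

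The additive regime is the crux, where the iteration bound must be improved to $O(m\log m)$. Here I would exploit that $g^*\in S^0$ forces $v(S^0)\ge v(g^*)$, while every $v(S^i)\le v(M)\le m\cdot v(g^*)$, so all visited values lie in a window of multiplicative width $m$; the target is a geometric/potential argument showing the value grows by a factor bounded away from $1$, so that $O(\log_{1+\Omega(1/m)}m)=O(m\log m)$ iterations suffice. I expect the per-iteration progress to be the main obstacle: a single re-seed only guarantees $v(S^{i+1})>v(S^i)$, and one can build instances (e.g.\ a high-value vertex adjacent to many independent medium-value vertices) where the gain $v(X^i_\ell)-v(S^i)$ is arbitrarily small, so a naive per-step multiplicative bound is false. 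Overcoming this requires leveraging additivity more carefully---via an amortized or charging argument showing that small-gain re-seeds cannot recur many times before \chainalg succeeds, or that the greedy complement sets $X_1,X_2$ cannot repeatedly beat $S^i$ by vanishing margins---and this structural fact about the additive dynamics is the technically delicate step.
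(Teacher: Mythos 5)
Your treatment of the first two regimes is correct and matches the paper: the per-call cost is $O(m\cdot T(m)+|E|)$ (the paper implicitly relies on the same shortcut you describe, namely locating the sign change of $v(A^{(i)}_1)-v(A^{(i)}_2)$ and testing only two allocations rather than all $k+1$; a linear scan of the $2(k+1)$ bundle values already suffices, so the binary search is fine but not needed), and the monovariant $v(S^0)<v(S^1)<\cdots$ gives the $3^{m/3}$ and $B$ iteration bounds exactly as in the paper.

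The additive case, however, is a genuine gap in your proposal, and your stated obstacle is not actually an obstacle. You claim that ``a naive per-step multiplicative bound is false'' because the gain $v(X^i_\ell)-v(S^i)$ can be arbitrarily small. This overlooks the information available when the loop fails to terminate: \chainalg{} returns NULL only after testing the endpoint allocation $(S^i,X^i_\ell)$ (or $(X^i_\ell,S^i)$) for EF1 and rejecting it. Since $v(X^i_\ell)>v(S^i)$, failure of EF1 at that endpoint means $v(S^i)<v(X^i_\ell\setminus\{g\})$ for \emph{every} $g\in X^i_\ell$, in particular for the most valuable one. Additivity then gives $\max_{g\in X^i_\ell}v(g)\geq v(X^i_\ell)/|X^i_\ell|\geq v(X^i_\ell)/m$, hence $v(S^i)<\tfrac{m-1}{m}\,v(X^i_\ell)\leq\tfrac{m-1}{m}\,v(S^{i+1})$ --- precisely the per-step multiplicative bound $v(S^{i+1})>\tfrac{m}{m-1}\,v(S^i)$ you believed impossible. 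Combined with $v(g^*)\leq v(S^0)$ and $v(S^i)\leq m\cdot v(g^*)$, this bounds the number of iterations by $\log_{m/(m-1)}m+1=O(m\log m)$. No amortized or charging argument is required; the missing ingredient is simply to use the EF1 failure of the chain's endpoint, not merely the inequality $v(X^i_\ell)>v(S^i)$.
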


\begin{proof}
Observe that the preprocessing time and the running time of each loop is $O(m \cdot T(m) + |E|)$. Thus, it suffices to bound the number of iterations of the loop in each case.
\begin{itemize}
\item For general monotone valuations, by \Cref{lem:large-value-then-ef1}, each loop either terminates or we must have $v(X^i_\ell) > v(S^i)$. From monotonicity, this implies $v(S^{i+1}) > v(S^i)$. This means that $S^0, S^1, \dots$ are all distinct maximal independent sets of $G$. Since there are at most $3^{m/3}$ maximal independent sets~\cite{Moon1965OnCI}, the number of iterations is at most $3^{m/3}$.
\item Next, suppose that $v(S)$ can take at most $B$ distinct values. From the argument above, $v(S^0), v(S^1), \dots$ are strictly increasing and are thus distinct. As a result, the number of iterations is at most $B$.
\item Finally, suppose that $v$ is additive. We claim that the following holds for each loop $i$ that does not terminate:
\begin{align} \label{eq:mult-increase-additive}
v(S^{i+1}) > \frac{m}{m - 1} \cdot v(S^i).
\end{align}
Before we prove \eqref{eq:mult-increase-additive}, let us first use it to bound the number of iterations. Notice that \eqref{eq:mult-increase-additive} implies that, after loop $i$, we must have $v(S^{i+1}) > \left(\frac{m}{m - 1}\right)^{i+1} \cdot v(S^0) \geq \left(\frac{m}{m - 1}\right)^{i+1} \cdot v(g^*)$. Moreover, our choice of $g^*$ ensures that $v(S^{i+1}) \leq m \cdot v(g^*)$. Thus, the number of iterations is at most $\log_{m/(m-1)} m + 1= O(m \log m)$.

To see that \eqref{eq:mult-increase-additive} holds, first recall from \Cref{alg:ef1-chain} that, if \Cref{alg:ef1-chain} returns NULL, it has considered either $(S^i, X^i_\ell)$ or $(X^i_\ell, S^i)$ already and has determined that this is not EF1. Since $v(X^i_\ell) > v(S^i)$, this means that, for any good $g \in X^i_\ell$, we must have $v(S) < v(X^i_\ell \setminus \left\{g\right\})$. When we pick $g \in X^i_\ell$ with the largest $v(g)$, we have
\begin{equation*}
v(g) \geq \frac{1}{\left|X^i_\ell\right|} v(X^i_\ell) \geq \frac{1}{m} v(X^i_\ell).
\end{equation*}
Hence, $v(S^i) < v(X^i_\ell \setminus \{g\}) \leq \frac{m-1}{m} v(X^i_\ell)$, proving \eqref{eq:mult-increase-additive}. \qedhere
\end{itemize}
\end{proof}

\section{Three or More Agents}\label{sec:three}

\subsection{Negative Examples}

While a maximal EF1 allocation always exists when there are two agents, it turns out that this is not the case for three agents or more.

\begin{restatable}{theorem}{ExampleThreeAgents}
 For $n = 3$ agents, there exists an instance with identical monotone valuations where no maximal EF1 allocation exists.
    \label{thm:counterexample_n3}
\end{restatable}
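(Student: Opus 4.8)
The plan is to construct an explicit small instance—a graph $G$, three agents with a single identical monotone valuation $v$, and to argue by exhaustive case analysis over all maximal allocations that none is EF1. Since we want a self-contained counterexample, I would first look for the simplest graph structure that forces a tension between maximality and EF1 for three agents. The intuition (drawing on the star example in the Preliminaries, where an agent stuck with a low-value center good cannot receive any other good) is that we want to engineer a situation where every maximal allocation either leaves some agent with too little value relative to another, or forces an agent to hold a ``blocking'' good of low value that monopolizes a region of the graph.

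\medskip

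\noindent\textbf{Step 1: Design the instance.} I would aim for a graph on a small number of vertices (perhaps built from complete bipartite pieces $K_{s,t}$, which the paper's notation suggests will be used) together with a carefully chosen valuation $v$ so that the set of maximal independent sets is highly constrained. The key design principle is to make the high-value goods mutually conflicting (so that no single agent can collect enough of them, and spreading them out leaves someone envious even after removing one good), while making the low-value goods act as ``covers'' that maximality forces someone to take. A natural candidate is to have three clusters of high-value goods arranged so that any maximal allocation must give one agent a strictly inferior bundle, and the EF1 slack (removing one good) is never enough to close the gap.

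\medskip

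\noindent\textbf{Step 2: Enumerate maximal allocations and check EF1.} Once the instance is fixed, I would characterize all maximal allocations up to the symmetry among the three agents, exploiting the graph's structure to drastically cut the case count (e.g., by noting which goods must be covered and by which agent's bundle). For each representative maximal allocation $(A_1,A_2,A_3)$, I would identify a pair $(i,j)$ witnessing an EF1 violation: an agent $i$ with $v(A_i) < v(A_j \setminus \{g\})$ for every $g \in A_j$. Because valuations are identical, the comparison reduces to comparing total bundle values with a one-good discount, which keeps the bookkeeping manageable.

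\medskip

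\noindent\textbf{Main obstacle.} The hard part will be the search for the right instance: it must be simultaneously small enough to analyze by hand and rigid enough that the maximality constraint genuinely bites in every case. There is real tension here, since maximality gives agents a lot of freedom to ``pad'' their bundles, which tends to equalize values and thus help EF1; I need the graph to prevent exactly this padding for at least one agent in every maximal allocation. I expect to spend most of the effort designing the conflict structure (likely a union of complete bipartite gadgets linking a few high-value goods to many low-value ``cover'' goods) and then verifying, possibly with a short structured case split rather than brute force, that the forced inclusion of a low-value cover good always produces an unavoidable envy gap exceeding the value of any single removable good.
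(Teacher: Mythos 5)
There is a genuine gap: what you have written is a search plan, not a proof. The entire mathematical content of this theorem is the explicit instance itself, and you never exhibit one. Your Step 1 says you ``would aim for'' a suitable graph and valuation, and your closing paragraph concedes that ``the hard part will be the search for the right instance''---but that search is precisely the proof, and it is not carried out. The paper's proof consists of a concrete construction ($7$ goods forming a $K_{3,3}$ with parts $\{1,2,3\}$ and $\{4,5,6\}$, plus a seventh good adjacent only to goods $1$ and $4$) together with a carefully tuned valuation, followed by a symmetry-reduced enumeration showing that only $6$ maximal allocations need to be checked and that each fails EF1. Your outline correctly anticipates the shape of that argument (small graph built from complete bipartite pieces, low-value ``cover'' goods forced in by maximality, case analysis up to symmetry), but anticipating the shape is not the same as producing the object.

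There is also a concrete pitfall in your plan that would likely derail the search. You write that because valuations are identical, ``the comparison reduces to comparing total bundle values with a one-good discount,'' which is additive reasoning. The paper's counterexample is essentially forced to use a \emph{non-additive} monotone valuation: for instance, it sets $v(\{2,7\}) = 3$ even though $v(2) + v(7) = 4$, and this subadditivity is exactly what kills allocations such as $(\{5,7\},\{1,2,3\},\{4,6\})$, where the agent holding the pendant good $7$ plus one high-value good would otherwise be EF1-satisfied. Indeed, the paper explicitly states that the existence of a three-agent counterexample with (even identical) \emph{additive} valuations remains open. So a search restricted to additive valuations---which is where your sketch implicitly points---is not known to succeed and may well be impossible; recognizing that one must exploit non-additivity is a key missing idea in your proposal.
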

\begin{proof}
Consider the following instance with $7$ goods. The graph consists of a complete bipartite graph $K_{3,3}$ with bipartition $X=\{1, 2, 3\}$ and $Y=\{4, 5, 6\}$ together with two edges $\{1, 7\}$ and $\{4, 7 \}$. Each of the three agents has an identical monotone valuation $v$ such that 
\begin{itemize}
			\item $v(\emptyset) = 0$; 
			\item $v(S) = 1$ if $|S|=1$ and $S\in \{ \{1\},\{4\} \}$; 
			\item $v(S) = 2$ if $|S|=1$ and $S \not \in \{ \{1\},\{4\} \}$; 
			\item $v(S) = 3$ if $S$ is $\{2, 7\}$, $\{3, 7\}$, $\{5, 7\}$, or $\{6, 7\}$; 
			\item $v(S) = 4$ for any other $S \subseteq M$. 
\end{itemize}

	Note that the symmetry on the graph and valuation holds: swapping $2$ and $3$, swapping $5$, $6$, and swapping $\{ 1, 2, 3\}$ and $\{4, 5, 6\}$ all lead to the same instance, and swapping the bundles of agents does not change whether the allocation is EF1 or not because valuations are identical. 

Specifically, consider any maximal allocation $\mathcal{A}$.
By maximality, good $7$ must be allocated to some agent since it has degree $2$. Since agents have identical valuations, without loss of generality, suppose it is allocated to agent $1$. Thus, agent $1$ can receive neither of good $1$ nor $4$; let us assume that if $1$ is allocated, then it is allocated to agent $2$ and if $4$ is allocated, then it is allocated to agent $3$. Further, agent $1$ can receive a good from at most one part of the bipartition $X$ and $Y$ due to conflicting constraints. Without loss of generality, suppose it is $Y$. Now consider the following cases. 

\begin{enumerate}
\item If none of the goods in $Y$ is allocated to agent $1$, then $\mathcal{A}$ must allocate $X$ to agent $2$ and $Y$ to agent $3$. 
\item Suppose that exactly one of the goods in $Y$, say good $5$, is allocated to agent $1$. In this case, observe that $4$ must be allocated. This is because if $4$ is unallocated, this means that both agents $2$ and $3$ receive at least one good from $X$ and $6$ must be allocated to agent $1$, a contradiction. 
Thus, the only possible maximal allocations are $(\{5,7\},\{1,2,3\},\{4,6\})$ and $(\{5,7\},\{6\},\{4\})$.
A similar argument holds when $6$ is allocated to agent $1$ due to the symmetry of the graph. 

\item Suppose that two goods, $5$ and $6$, in $Y$ are allocated to agent $1$. If $4$ is allocated, then the only possible maximal allocation is $(\{5,6,7\},\{1,2,3\},\{4\})$. 
If $4$ is unallocated, this means that both agents $2$ and $3$ receive at least one good from $X$ and all goods in $X$ must be allocated to either of such agents. Thus, the only possible maximal allocations are $(\{5,6,7\},\{1\},\{2,3\})$, $(\{5,6,7\},\{1,2\},\{3\})$, together with $(\{5,6,7\},\{1,3\},\{2\})$. 
\end{enumerate}

    For this reason, there are only $6$ maximal allocations to consider (refer to Figure \ref{fig:counterexample_n3}). All of them are not EF1, because:
	\begin{itemize}
		\item Allocations \#1, \#4, \#5, \#6: one agent takes one good but there is another agent who takes three goods
		\item Allocation \#2: $v(\{5, 7\}) = 3$ but $v(\{1, 2\}) = v(\{1, 3\}) = v(\{2, 3\}) = 4$.
		\item Allocation \#3: $v(4) = 1$ but $v(5) = v(7) = 2$.~\qedhere
	\end{itemize}
\end{proof}

\begin{figure}[htbp]
	\centering
	\includegraphics[width=0.5\textwidth]{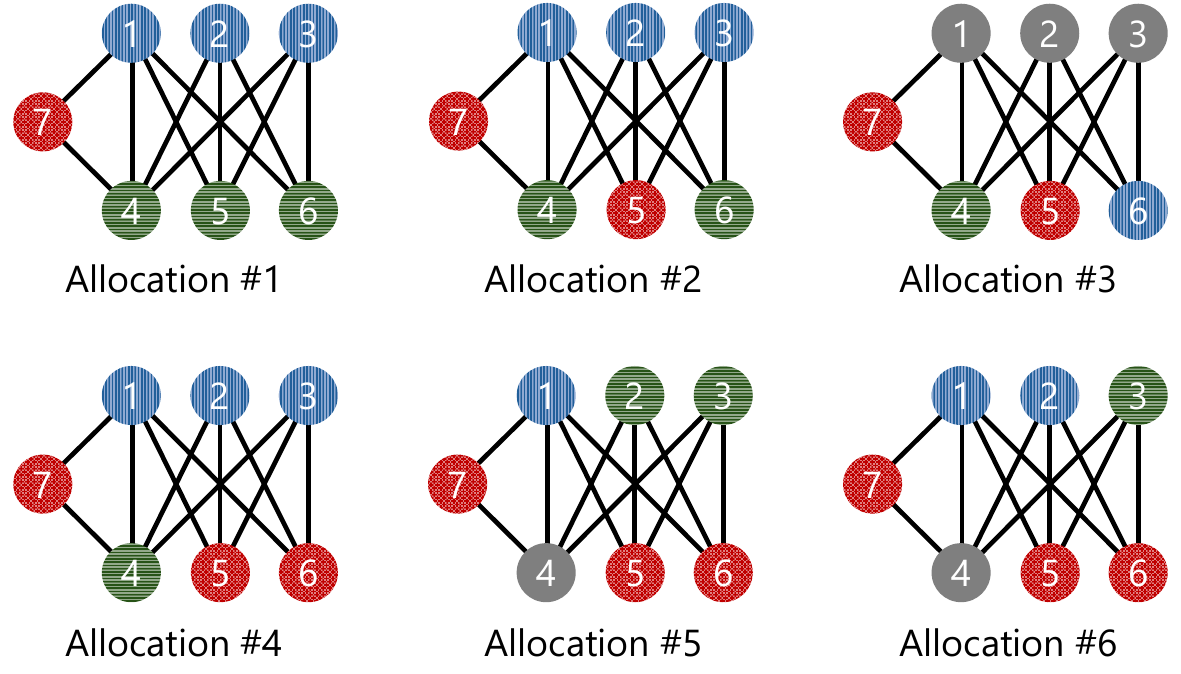}
	\caption{$6$ maximal allocations to consider in the instance of Theorem \ref{thm:counterexample_n3}. The vertices in red, blue, green, and gray are goods taken by agent $1, 2, 3$, and no one, respectively.}
	\label{fig:counterexample_n3}
\end{figure}

The instance constructed in the proof of Theorem~\ref{thm:counterexample_n3} uses an identical monotone valuation. It remains an open question whether a counterexample exists for three agents with (even identical) additive valuations.

For every number $n \geq 4$ of agents, Hummel and Hetland~\cite{HummelH22} presented an instance with identical additive valuations and a complete bipartite graph $K_{n-1,n-1}$ for which a complete EF1 allocation does not exist.
In such cases, a complete allocation does exist, as the maximum degree of the graph is less than the number of agents. Thus, this counterexample implies that achieving both EF1 and maximality is impossible even when four agents have identical additive valuations. Here, we provide a smaller example using $K_{3,n-1}$, which turns out to be smallest since for $m\leq n+1$, there always exists a maximal EF1 allocation. 


\begin{restatable}{proposition}{ExampleFourAgents}
 For every number $n \geq 4$ of agents, there is an instance with identical additive valuations, $m=n+2$, and $G=K_{3,n-1}$ where no maximal EF1 allocation exists.
    \label{thm:counterexample_n4}
\end{restatable}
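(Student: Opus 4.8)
The plan is to give one explicit instance for each $n\ge 4$ and show directly that no maximal allocation is EF1. Write the two sides of $G=K_{3,n-1}$ as $X$ and $Y$ with $|X|=3$ and $|Y|=n-1$, so that $m=|X|+|Y|=n+2$ as required, and define the identical additive valuation $v$ by setting $v(g)=p$ for every $g\in X$ and $v(g)=q$ for every $g\in Y$, where $p<q<2p$ (concretely $p=2$, $q=3$). Because every edge of $K_{3,n-1}$ runs between $X$ and $Y$, the only independent sets are the subsets of $X$ and the subsets of $Y$; hence in any allocation each bundle lies entirely inside $X$ or entirely inside $Y$.

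First I would show that maximality forces the allocation to be \emph{complete}. If some $y\in Y$ were left unallocated, maximality would demand that each of the $n$ agents hold a neighbor of $y$; but the neighborhood of $y$ is all of $X$, which has only $3<n$ goods, and the bundles are disjoint, a contradiction. Symmetrically, leaving some $x\in X$ unallocated would force every agent to hold a good of $Y$, impossible since $|Y|=n-1<n$. Thus every maximal allocation allocates all $n+2$ goods; this is the step that uses $n\ge 4$ (namely $3<n$).

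Next I would eliminate, in two stages, every complete allocation. If some agent receives the empty bundle, then the $n+2$ goods lie in at most $n-1$ nonempty bundles, so by pigeonhole one bundle holds at least two goods of positive value, which the empty agent EF1-envies; such allocations are never EF1. It remains to treat complete allocations in which all $n$ agents are nonempty. Let $a\in\{1,2,3\}$ be the number of agents receiving $X$-goods (it is at most $3$ since $X$ is partitioned into nonempty parts, and at least $1$ by completeness); the other $n-a$ agents partition $Y$. For $a=1$ the bundles are one set of value $3p$ together with $n-1$ singletons of value $q$, and $q<2p$ makes a $q$-agent EF1-envy the $3p$-agent. For $a\in\{2,3\}$ there is always an agent holding a single $X$-good (value $p$), while distributing the $n-1$ goods of $Y$ into only $n-a\le n-2$ nonempty parts leaves some $Y$-bundle of size at least $2$, and hence of value at least $2q$; as $p<q$, that single-$X$-good agent EF1-envies this bundle, since its value minus its largest good is at least $2q-q=q>p$.

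The main obstacle is precisely the freedom in how $Y$ may be split when $a\in\{2,3\}$: I must defeat \emph{every} partition of $Y$, not merely a balanced one. The resolution is the elementary counting fact that $n-1$ goods spread over at most $n-2$ nonempty parts always produce a part of size at least $2$, so a value-$p$ bundle always has a legitimate EF1-envy target. The two inequalities then do complementary jobs: $p<q$ defeats all allocations in which $Y$ is lumped (a cheap $X$-good envies a heavy $Y$-bundle), while $q<2p$ defeats the single remaining allocation in which $Y$ is fully spread into singletons (a cheap $Y$-good envies the lumped $X$-bundle of value $3p$). Any $p,q$ with $p<q<2p$, e.g.\ $p=2$ and $q=3$, therefore works uniformly for all $n\ge 4$, completing the argument.
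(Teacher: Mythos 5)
Your proposal is correct and follows essentially the same route as the paper: the identical instance (values $2$ on the $3$-side and $3$ on the $(n-1)$-side), completeness forced by the degree bound $\max(3,n-1)<n$, and a pigeonhole case analysis that pits a poor agent (empty or holding a single cheap good) against a bundle of two or more goods, with the single fully-spread case killed by $3<2\cdot 2$. The only cosmetic differences are that you organize the cases by the number of agents holding $X$-goods rather than by the number holding no $Y$-goods, and you note the general condition $p<q<2p$.
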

\begin{proof}
    Consider the following instance with $m = n+2$ goods and a complete bipartite graph $K_{3, n-1}$ with the left part containing three goods and the right part containing $n-1$ goods. Let $a_1, a_2, a_3$ denote the three goods on the left side and $b_1,b_2,\ldots,b_{n-1}$. Each agent has an identical additive valuation where
        \begin{equation*}
            v(g) = \begin{cases}
                2 & (g \in \{a_1, a_2, a_3 \}) \\
                3 & (\text{otherwise}).
            \end{cases}
        \end{equation*}
    Consider an arbitrary maximal allocation $\mathcal{A}$. 
    Since the maximum degree of the graph is $n-1$, all goods must be allocated in this allocation. Due to pigeonhole's principle, there exists an agent X that cannot receive any of goods on the right.
    \begin{itemize}
        \item Case 1: If there is only one such agent, goods $b_1,b_2,\ldots,b_{n-1}$ are allocated to $n-1$ other agents, which means that each of such agents takes one good of value $3$. The remaining goods $a_1, a_2, a_3$ must be taken by agent X. Thus, agent X is envied by each of the other agents even after removing one of the goods $a_1, a_2, a_3$ from X's bundle.
        \item Case 2: If there are two or more such agents, there exists an agent Y who takes two or more goods among goods $b_1,b_2,\ldots,b_{n-1}$; further, there exists an agent Z who does not receive any of the goods on the right and takes at most one good from the left side. Then, agent Y is envied by agent Z after removing one of the goods from Y's bundle.
    \end{itemize}
    Therefore, in either case, the allocation $\mathcal{A}$ is not EF1.
\end{proof}


\begin{restatable}{proposition}{nPlusOne}
For $n$ agents with monotone valuations and $m$ goods with $m \leq n+1$, there exists a maximal EF1 allocation.
    \label{prop:nPlusOne}
\end{restatable}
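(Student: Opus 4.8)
The plan is to treat the near-trivial regime $m \le n$ separately and then concentrate all the work on the single leftover good in the case $m = n+1$. If $m \le n$, I would simply assign the $m$ goods injectively to $m$ distinct agents, leaving the remaining agents empty-handed. Every bundle then has at most one good, so each bundle is automatically an independent set, the allocation is complete and hence maximal, and EF1 holds trivially: deleting the lone good from any bundle $A_j$ leaves $v_i(\emptyset)=0 \le v_i(A_i)$. Thus the only substantive case is $m = n+1$, where exactly one good cannot be given its own private agent.

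For $m=n+1$ the key device is a \emph{maximum-weight agent-saturating matching}. Since any single good forms an independent set, every agent can hold every good, so the agent-good incidence structure is a complete bipartite graph; I would take a matching $\mu$ that assigns each agent exactly one good and maximizes $\sum_{i \in N} v_i(\mu(i))$. Such a matching uses $n$ of the $n+1$ goods and leaves precisely one good $g^*$ unmatched. The crucial property I would extract is that $v_i(\mu(i)) \ge v_i(g^*)$ for every agent $i$: otherwise, reassigning $i$ to $g^*$ and releasing $\mu(i)$ keeps the matching agent-saturating while strictly increasing its weight, contradicting maximality. This single inequality is exactly what makes the forced ``double bundle'' EF1-safe.

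I would then split on whether $g^*$ is adjacent to all other goods. If $g^*$ is universal, I leave it unallocated and give each agent its singleton $\mu(i)$; all bundles are singletons so EF1 is immediate, and maximality holds because $g^*$ conflicts with $\mu(i)$ for every $i$. Otherwise $g^*$ has a non-neighbor, which must be some matched good $\mu(a)$ (the matched goods are exactly all goods except $g^*$); I add $g^*$ to agent $a$, so $a$ receives the independent pair $\{\mu(a),g^*\}$ while everyone else keeps a singleton. This allocation is complete, hence maximal. For EF1, the only non-trivial comparison is any agent $i \ne a$ toward agent $a$: removing $\mu(a)$ from $a$'s bundle leaves $\{g^*\}$, and $v_i(\mu(i)) \ge v_i(g^*)$ by the matching property, so $i$ does not envy $a$ up to one good; every other comparison is against a singleton bundle and holds trivially.

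The step I expect to be the main obstacle, and the reason a naive ``give everyone one good'' argument fails, is guaranteeing EF1 once some agent is forced to hold two goods under \emph{heterogeneous} monotone valuations: a badly chosen pair can make a single-good agent envy the double bundle even after one deletion. The maximum-weight matching resolves this precisely because it certifies $v_i(\mu(i)) \ge v_i(g^*)$ simultaneously for all agents, so deleting the matched good $\mu(a)$ (rather than $g^*$) from the double bundle always neutralizes the envy. I would also verify the two bookkeeping points the argument silently relies on: that an agent-saturating matching exists at all (immediate since $m \ge n$ and the bipartite graph is complete), and that the two sub-cases, namely $g^*$ universal versus $g^*$ having a non-neighbor, are genuinely exhaustive.
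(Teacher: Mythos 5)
Your proof is correct. It shares the paper's overall skeleton -- give each agent one good, then attach the single leftover good $g^*$ to some agent who can feasibly take it (or leave it unallocated if it conflicts with everyone), and argue EF1 by deleting the \emph{matched} good from the unique two-good bundle so that only the comparison against $\{g^*\}$ remains -- but you certify the key inequality $v_i(\mu(i)) \geq v_i(g^*)$ differently. The paper runs one round of round-robin: each agent in turn picks its favourite remaining good, so the leftover good was available to everyone and rejected, which yields the inequality for free. You instead take a maximum-weight agent-saturating matching in the complete agent--good bipartite graph and derive the same inequality by an exchange argument (reassigning any violating agent to $g^*$ would increase the weight). Both arguments are valid for monotone valuations since only singleton values are compared; your route invokes slightly heavier machinery (optimal bipartite matching versus a greedy pass), while the paper's round-robin is more elementary and handles $m \leq n$ and $m = n+1$ uniformly in a single procedure, but nothing is gained or lost in generality.
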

\begin{proof}
Suppose $m \leq n+1$. Consider the following algorithm. 
	\begin{enumerate}
		\item For each agent $i= 1, 2, \dots, n$ in this order, run one cycle of the round-robin algorithm. Namely, agent $i$ chooses the most preferred good $g^*$ among the remaining goods (namely, $g^* \in \argmax_{g \in R}v_i(g)$ where $R$ is the set of remaining goods). If there is no remaining good, the agent does not receive any good.  
		\item After that, at most one good $g$ remains. If $g$ can be feasibly assigned to some agent $i$ (namely, $g$ is not adjacent to the good received by $i$ in the first phase), assign it to any of such agents.
	\end{enumerate}
	It is obvious that the resulting allocation is maximal. If $m \leq n$, the resulting allocation is clearly EF1 since each agent receives at most one good. If $m = n+1$, consider an agent $i$ who gets two goods. After removing the first good she has chosen, each of the other agents does not envy $i$ since the last good $g$ was not selected in the first phase.  
\end{proof}


\subsection{NP-Hardness}

Now, we will show that it is NP-hard to decide whether a maximal EF1 allocation exists, as stated below.

\begin{theorem}
Given the graph and valuations, determining whether a maximal EF1 allocation exists is NP-hard for:
\begin{enumerate}
\item any fixed $n \geq 4$, even when restricted to identical and additive valuation, and,
\item $n = 3$, even when restricted to identical and monotone valuation.
\end{enumerate}
\label{thm:nphard}
\end{theorem}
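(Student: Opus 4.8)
The plan is to prove both parts by polynomial-time reductions from an NP-hard combinatorial problem, using the two impossibility gadgets already built in \Cref{thm:counterexample_n3,thm:counterexample_n4} as a ``hard core'' that is satisfiable (admits a maximal EF1 allocation) precisely when the encoded instance is a yes-instance. Since the $K_{3,n-1}$ gadget of \Cref{thm:counterexample_n4} is parameterized by $n$, I would build the additive reduction directly for each fixed $n\ge 4$ (rather than proving a single base case and padding), while the monotone reduction targets the single fixed value $n=3$, where the $K_{3,3}$-based gadget of \Cref{thm:counterexample_n3} is available.

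For the additive case ($n\ge 4$), I would reduce from a balanced-partition-under-conflicts problem — concretely an equitable-coloring / load-balanced independent-set-partition (mutual exclusion scheduling) problem, which is NP-hard. An encoding graph $H$ supplies the combinatorial difficulty, and a $K_{3,n-1}$-style forcing gadget is attached so that in \emph{any} maximal allocation one designated agent is compelled to absorb a fixed high-value bundle. By additivity, EF1 then forces each remaining agent to assemble an independent set of $H$ whose total value meets a common threshold up to one good, which is exactly the sought balanced partition. The forward direction turns a valid partition into the required bundles (checking maximality and EF1 directly); the reverse direction argues that the gadget forces any maximal EF1 allocation to be complete on the encoding part and to split $H$ into the balanced independent sets. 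All goods receive explicit integer weights, so the valuation is identical and additive as required.

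For $n=3$ with monotone valuations I would exploit the full expressive power of monotone set functions — the same flexibility used to build the fixed impossibility instance of \Cref{thm:counterexample_n3}, whose valuation is constant except on a short explicit list of small ``special'' sets. I would reduce from $3$-coloring (or a monotone-encodable partition problem): take the encoding graph and define an identical monotone $v$ that is nearly constant but assigns distinguished values to the configurations corresponding to the counterexample gadget, so that the three agents can simultaneously avoid envy exactly when the instance admits the desired partition. Because $v$ is specified by a polynomial-size list of exceptions and is globally monotone, it is a legal succinct input and the reduction runs in polynomial time.

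The main obstacle I anticipate is the inherent tension between \emph{maximality} and combinatorial hardness: encoding an NP-hard coloring/partition genuinely requires high-degree vertices, yet high degree lets a maximal allocation legally leave goods \emph{unassigned} (a good may be dropped precisely when it has a neighbor in every bundle), creating ``escape routes'' that could make a maximal EF1 allocation exist even on no-instances — recall that for $m\le n+1$ one always exists by \Cref{prop:nPlusOne}, and that low maximum degree would trivialize the coloring subproblem. The crux of the gadget design is therefore to guarantee that every such escape route itself triggers an EF1 violation, so that any maximal EF1 allocation must be complete on the encoding part and must realize a balanced proper coloring. Proving this equivalence in both directions, while keeping the valuations identical-and-additive (resp.\ succinctly monotone) and while scaling the gadget cleanly with $n$, is where the real work lies.
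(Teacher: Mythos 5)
Your high-level plan---use the negative instances of \Cref{thm:counterexample_n3,thm:counterexample_n4} as gadgets and reduce from an NP-hard graph problem---matches the paper's strategy in spirit, but the proposal stops exactly where the real construction is needed, and the mechanism you sketch is not the one that works. You correctly identify the central obstacle: maximality lets an agent drop a good whenever it has a neighbor in every bundle, so a reduction in which the agents must \emph{partition} an encoding graph $H$ into balanced independent sets (your equitable-coloring / mutual-exclusion-scheduling route) has to close off all the ``escape routes,'' and you explicitly leave that as an unsolved crux. In addition, the forcing you describe (``one designated agent is compelled to absorb a fixed high-value bundle'') is not what the counterexample gadgets actually provide: in \Cref{thm:counterexample_n4} there are several structurally different maximal allocations, each failing EF1 for a different reason, and no single agent or bundle is pinned down. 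Finally, your $n=3$ case is handled by a separate, equally unspecified reduction, so neither part of the theorem is actually established.

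The paper's proof (\Cref{lem:main-red}) sidesteps the partition problem entirely. It reduces from Independent Set: given $(H,t)$, it keeps the counterexample $\tI$ untouched and disconnected from everything new, so that any maximal allocation of the constructed instance restricts to a maximal allocation of $\tI$ and therefore carries an irreducible envy gap $\gamma>0$ (computable by brute force since $\tI$ has constant size). Each agent $i$ then gets a \emph{private} copy $X_i$ of $H$ whose goods are worth $\lambda=\gamma/t$, together with zero-value dummy twins $Y_i$ (one per vertex of $H$, joined to its twin), plus complete joins between distinct copies. The twins are the key device your plan is missing: they make every independent set of $H$, of every size, extendible to a maximal configuration inside $X_i\cup Y_i$, so maximality imposes no constraint on how many $\lambda$-goods an agent collects---the only constraint is that they form an independent set of $H$. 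An envied-against agent can then close the gap $\gamma$ iff it can collect $t$ such goods, i.e., iff $H$ has an independent set of size $t$. This single lemma yields both parts of the theorem (additive for $n\ge 4$ from \Cref{thm:counterexample_n4}, monotone for $n=3$ from \Cref{thm:counterexample_n3}), whereas your plan requires two separate and, as written, incomplete reductions.
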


In fact, our proof can transform any negative example into an NP-hardness result, as stated more precisely below.


\begin{lemma} \label{lem:main-red}
Suppose that there exists an instance $\tI = ([n], \tM, \tv, \tG = (\tM, \tE))$ (with identical valuation) where the number $n$ of agents and the number $|\tM|$ of items are both constants, such that no maximal EF1 allocation exists. Then, it is NP-hard to decide whether a maximal EF1 allocation exists for $n$ agents with identical valuations. 
Furthermore, if $\tv$ is additive, then this NP-hardness applies even when the valuations are additive. 
\end{lemma}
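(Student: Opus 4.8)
The plan is to give a polynomial-time many-one reduction from a fixed NP-hard problem $\Pi$ into the problem of deciding whether a maximal EF1 allocation exists for $n$ agents with identical valuations; I would take $\Pi$ to be a balanced-partition / equitable-coloring-type problem (NP-hard for each fixed $n \ge 3$), for reasons explained below. The structural idea is to form the conflict graph as a disjoint union $G = \tG \sqcup G'$, where $\tG$ is the fixed bad gadget supplied by the hypothesis and $G'$ is an ``encoding component'' built from the input instance of $\Pi$. Two facts drive the reduction: (i) on a disjoint union an allocation is maximal iff its restriction to each component is maximal, since an unallocated good is only ever dominated within its own component; and (ii) I define the combined valuation to be additively separable across the two components---equal to $\tv$ on $\tM$ and to a chosen additive valuation $v'$ on the goods of $G'$---so that each agent's bundle value is the sum of a gadget-part value and an additive $G'$-part value. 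Thus $G'$ functions as an additive \emph{compensator} for whatever value profile the gadget allocation produces. The number of agents is kept at $n$, a constant, and the combined valuation is monotone in general and additive exactly when $\tv$ is, which will give the additive strengthening for free.

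First I would record the fixed behaviour of the gadget: since $\tI$ has no maximal EF1 allocation and $|\tM|$ is constant, there are only finitely many maximal allocations of $\tG$, and each produces a bundle-value profile that, on its own, violates EF1. The role of $G'$ is to additively shift these profiles into an EF1-satisfying configuration. I would therefore design $G'$ (together with a constant amount of fixed ``ballast'' goods used to set targets) so that its maximal allocations correspond exactly to the feasible partitions counted by $\Pi$, and so that the combined profile is EF1 precisely when the chosen maximal allocation of $G'$ is ``balanced'' in the sense demanded by $\Pi$. Correctness then splits as expected. If the $\Pi$-instance is a \textsc{Yes}-instance, a balanced maximal allocation of $G'$ exists and, paired with a suitable gadget allocation, cancels the gadget's imbalance, yielding a maximal EF1 allocation of $G$. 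If it is a \textsc{No}-instance, every maximal allocation of $G'$ is too unbalanced, so for each of the finitely many gadget profiles the combined profile still fails EF1, and no maximal EF1 allocation of $G$ exists.

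The step I expect to be the main obstacle is the \emph{calibration} of the valuation so that EF1's ``up to one good'' slack is aligned exactly with the combinatorial gap of $\Pi$: if the goods of $G'$ carry too much value, a single-good removal supplies so much slack that essentially every allocation becomes EF1 and the instance is trivially a \textsc{Yes}; if they carry too little, the gadget's imbalance can never be compensated and the instance is trivially a \textsc{No}. This is why I would choose a $\Pi$ whose natural gap is measured in unit counts, so that deleting one good shifts a bundle value by exactly one and matches EF1's slack without delicate numerical scaling; an equitable-coloring-type problem is well suited. A secondary point that must be checked is that no ``mixed'' maximal allocation---maximal on each component but satisfying EF1 only through a fortuitous single-good removal on the $G'$ side---can rescue a \textsc{No}-instance; here the finiteness of the gadget's profile set and a careful choice of ballast are essential. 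Finally, I would confirm that the chosen source problem is NP-hard for the specific constant $n$ carried by the gadget, so that the conclusion holds for that $n$ (and thus, in the paper's applications, for every fixed $n \ge 3$).
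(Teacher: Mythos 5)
Your high-level architecture does match the paper's: the constructed graph is indeed a disjoint union of the fixed bad gadget $\tG$ and a ``compensator'' component, the valuation is separable across the two parts (equal to $\tv$ on $\tM$), and the correctness argument does rest on the facts that maximality decomposes over connected components and that the gadget's finitely many maximal allocations all violate EF1 by at least some fixed amount $\gamma>0$. However, you have deferred precisely the steps that constitute the proof, and one of them points in the wrong direction. The paper reduces from \emph{Independent Set}, not from a balanced-partition problem, and the compensator is built from $n$ vertex-disjoint copies $X_1,\dots,X_n$ of the input graph $H$ (one per agent), joined pairwise by complete bipartite edges so that each agent can draw goods from at most one copy, with each $x_{i,w}$ given a zero-value partner $y_{i,w}$ joined to it by an edge. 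The dummy partners are what make \emph{every} independent set of \emph{every} size $0,1,\dots,t$ extend to a maximal allocation of the component; without something playing this role your ``maximal allocations of $G'$ correspond to feasible partitions of $\Pi$'' claim does not hold, since a small independent set is typically not maximal. The calibration you flag as the main obstacle is resolved in the paper by computing $\gamma := \min_{\tcA}\max_{i,i'}\bigl(\tv^{-1}(\tA_i)-\tv(\tA_{i'})\bigr)$ by brute force over the constant-size gadget and setting each compensator good's value to $\lambda := \gamma/t$, so that $t$ goods exactly close the worst EF1 gap while at most $t-1$ provably cannot; choosing a unit-count source problem does not substitute for this, because $\gamma$ is an arbitrary positive real determined by $\tv$.

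The directional issue is this: a \emph{balanced} allocation of $G'$ hands (roughly) equal extra value to every agent, and adding equal amounts to an imbalanced gadget profile leaves it imbalanced, so EF1 cannot be restored this way. The compensation must be agent-asymmetric --- the disadvantaged agent $i$ needs $c_i = \lceil(\tv^{-1}(\tA^*_1)-\tv(\tA^*_i))/\lambda\rceil$ extra goods while the top agent needs none --- and this is exactly what the per-agent copies of $H$ enable: agent $i$ takes an independent set of size $c_i\le t$ from its own copy, which exists iff $H$ has an independent set of size $t$. Your secondary worry about ``mixed'' maximal allocations is handled in the paper by the monotonicity of $v^{-1}$ (compensator goods added to the envied bundle only increase $v^{-1}(A_i)$) together with the bound $|A_{i'}\cap X|<t$ forced by the type-(ii) edges in a NO-instance. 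As written, your proposal identifies the right obstacles but supplies neither the gadget design nor the numerical argument that overcomes them, so it is an outline rather than a proof.
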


\Cref{thm:nphard} is an immediate corollary of  \Cref{lem:main-red} where $\tI$ is the instance from \Cref{thm:counterexample_n4} or \Cref{thm:counterexample_n3}. Note that we state \Cref{lem:main-red} in this generic form so that, if subsequent work finds such an instance $\tI$ for additive valuation for $n = 3$, then the NP-hardness would follow as a corollary.

\paragraph{Reduction.}
We will reduce from the Independent Set (IS) problem, which is one of Karp's classic NP-hard problems~\cite{Karp72}. In the IS problem, we are given a graph $H = (V_H, E_H)$ and a positive integer $t$, and the goal is to decide whether $H$ contains an IS of size $t$.

At a high-level, our reduction starts from $\tI$ and adds to it $n$ copies of the graph $H$, where each good has the same value $\lambda$. Roughly speaking, we wish the $i$-th copy of $H$ (denoted by $X_i$ in the proof below) to give ``extra goods'' to the $i$-th agent, in case that agent envies some other agent by more than one good. The crux of the reduction is that such an agent can ``catch up'' (and thus satisfy EF1) iff there is a sufficiently large independent set in $H$.  This is not yet a complete reduction since, $H$ may not have a \emph{maximal} independent set of a certain prescribed size. To alleviate this, we introduce ``dummy goods'' with zero value (denoted by $Y_i$ below) to ensure that we can pick any desired number of goods from each copy of $H$. Finally, some additional edges are also added to ensure that each agent selects goods from a single copy of $H$.

For the proof below, we will use the following notation: for any valuation $v$ and set $S$ of goods, let $v^{-1}(S) := \min_{j \in S} v(S \setminus \{j\})$ denote the value of $S$ after its most valuable good is removed. We use the convention $v^{-1}(\emptyset) = 0$.

\begin{proof}[Proof of \Cref{lem:main-red}]
Recall $\tI$ from the lemma statement.
Let\footnote{$\gamma$ can be computed in $O(1)$ time by bruteforce.}
$\gamma := \min_{\tcA} \max_{i, i' \in [n]} \left(\tv^{-1}(\tA_i) - \tv(\tA_{i'})\right)$ where the outer minimum is over all maximal allocation $\tcA$ of $\tI$. By the assumption on $\tI$, we have $\gamma > 0$. Let $\lambda := \gamma / t$.

Let $(H = (V_H, E_H), t)$ denote the IS instance. Our reduction constructs the instance $I = ([n], M, v, G)$ as follows:
\begin{itemize}
\item {\bf Goods}: $M = \tM \cup X_1 \cdots \cup X_n \cup Y_1 \cup \cdots \cup Y_n$ where $X_i = \{x_{i, w} \mid w \in V_H\}$ and $Y_i = \{y_{i, w} \mid w \in V_H\}$ are sets (each of size $|V_H|$) of additional goods.
\item {\bf Graph}: $G$ contains the following edges:
	\begin{enumerate}[(i)]
	\item All edges in $\tG$,
	\item $(x_{i, u}, x_{i, w})$ for all $i \in [n]$ and $(u, w) \in V_H$,
        \item $(x_{i, w}, y_{i, w})$ for all $i \in [n]$ and $w \in V_H$,
        \item all pairs of vertices in $(X_i \cup Y_i) \times (X_{i'} \cup Y_{i'})$ for all distinct $i, i' \in [n]$.
	\end{enumerate}
	\item {\bf Valuation}: For all $S \subseteq M$, let $v(S) = \tv(S \cap \tM) + \lambda |S \cap X|$ where $X := X_1 \cup \cdots \cup X_n$. That is, the valuations on original goods remain the same, the valuations of each good in $X$ is $\lambda$, and the goods in $Y_1 \cup \cdots \cup Y_n$ have valuations zero.
\end{itemize}

See Figure \ref{fig:nphard_construction} for an illustration of the instance $I$.



It is clear that the reduction runs in polynomial time, and that, if $\tv$ is additive, then $v$ is also additive.

\paragraph{(YES)} Suppose that $H$ contains an IS of size $t$.
Let $\tcA^*$ be a maximal allocation of $\tI$ such that $\max_{i, i' \in [n]} \left(\tv^{-1}(\tA^*_i) -  \tv(\tA^*_{i'})\right) = \gamma$. We may assume w.l.o.g. that $v^{-1}(\tA^*_1) \geq v^{-1}(\tA^*_2), \dots, v^{-1}(\tA^*_n)$. For each $i \in [n]$, we construct $A^*_i$ as follows:
\begin{enumerate}
\item  Let $c_i := \lceil \max\{0, \tv^{-1}(\tA^*_1) - \tv(\tA_i^*)\} / \lambda \rceil \leq t$.
\item Let $S_i$ be any (non-necessarily maximal) IS of size $c_i$ in $H$, which exists since $H$ contains an IS of size $t$.
\item Let $A^*_i = \tA^*_i \cup \{x_{i, v}\}_{v \in S_i} \cup \{y_{i, v}\}_{v \in (V_H \setminus S_i)}$
\end{enumerate}
Observe that each good in $X_i \cup Y_i$ can only belong to $A^*_i$, and it is obvious that there is no edge between goods in $A^*_i$. Thus, $\cA^* = (A^*_1, \dots, A^*_n)$ is a valid allocation. To see that this is maximal, note that the goods from $Y_i$ (resp., $X_i$) together with type-(iii) edges ensure that no other goods in $X_i$ (resp., $Y_i$) can be added to $A^*_i$. Since at least one good from $X_i \cup Y_i$ is picked, type-(iv) edges ensure that no goods in $X_{i'} \cup Y_{i'}$ for $i' \ne i$ can be added to $A^*_i$.

Finally, we argue that $\cA^*$ is EF1. 
To bound $v^{-1}(A^*_i)$, note that $v(A^*_{i}) = \tv(\tA^*_{i}) + c_i \lambda$. Consider two cases based on $c_i$.
\begin{itemize}
\item If $c_i = 0$, we have $v^{-1}(A^*_i) = \tv^{-1}(\tA^{*}_i) \leq \tv^{-1}(\tA^*_1)$.
\item If $c_i > 0$, by definition of $c_i$, we have $v(A^*_i) < \tv^{-1}(\tA^*_1) + \lambda$. Thus, $v^{-1}(A^*_i) \leq v(A^*_i) - \lambda < \tv^{-1}(\tA^*_1)$.
\end{itemize}
Thus, in both cases, we have $v^{-1}(A^*_i) \leq \tv^{-1}(\tA^*_1)$. 

On the other hand, for any $i' \in [n]$, the definition of $c_{i'}$ immediately implies $v(A^*_{i'}) \geq \tv^{-1}(\tA^*_1)$.

By the two previous paragraphs, $\cA^*$ is EF1.

\paragraph{(NO)} Suppose that $H$ does not contain an IS of size $t$. Consider any maximal allocation $\cA = (A_1, \dots, A_n)$ of $I$. Notice that the allocation $\tcA = (A_1 \cap \tM, \dots, A_n \cap \tM)$ is maximal w.r.t. $\tI$. Thus, there exist $i, i' \in [n]$ such that $\tv^{-1}(\tA_i) - \tv(\tA_{i'}) \geq \gamma$. Due to type-(iv) edges, at most one of $A_{i'} \cap X_1, \dots, A_{i'} \cap X_n$ can be non-empty. Furthermore, type-(ii) edges imply that the non-empty set must correspond to an independent set in $H$. From our assumption, this implies $|A_{i'} \cap X| < t$. As a result,
\begin{align*}
v^{-1}(A_i) \geq \tv^{-1}(A_i)
&\geq \gamma + \tv(\tA_{i'}) \\
&= \gamma + v(A_{i'}) - \lambda |A_{i'} \cap X| \\
&> \gamma + v(A_{i'}) - \lambda \cdot t  \qquad \overset{(\star)}{\geq} v(A_{i'}),
\end{align*}
where $(\star)$ is due to our choice of $\lambda$. Thus, $\tA$ is not EF1.
\end{proof}

\begin{figure}[htbp]
	\centering
	\includegraphics[width=0.5\textwidth]{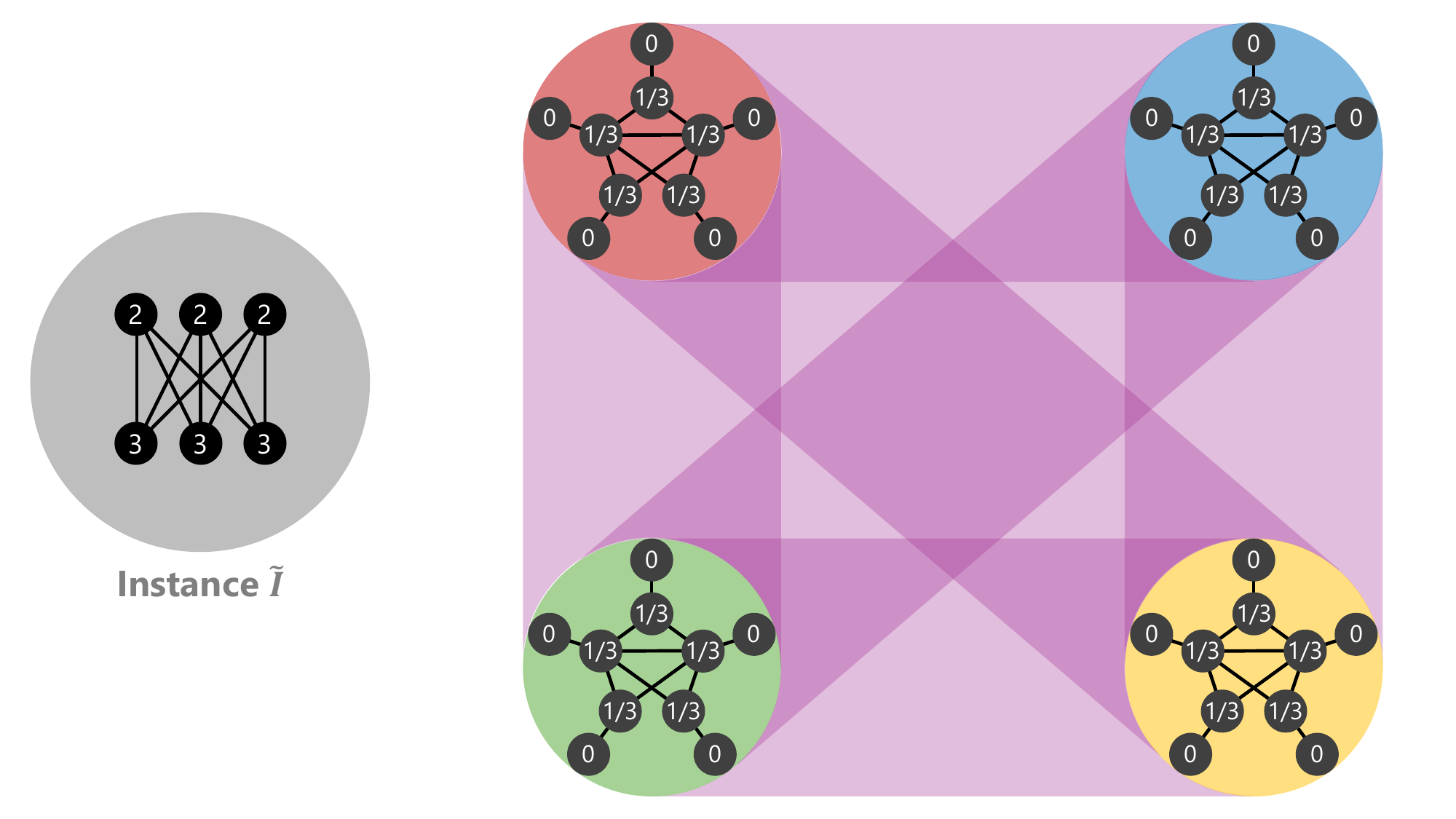}
	\caption{Instance $I$ created with $n = 4$ instance given by Proposition \ref{thm:counterexample_n4} for $\tI$, and $5$-vertex $7$-edge graph for $H$, setting $t = 3$ (note that $\gamma = 1, \lambda = \frac{1}{3}$). The bands in purple represents type-(iv) edges.}.
	\label{fig:nphard_construction}
\end{figure}

\section{Chore Allocation}

In this section, we consider the chore version of our problem, where each agent $i$ has a \emph{monotone non-increasing} valuation function $v_i$, namely, $v_i(S) \geq v_i(T)$ for every $S \subseteq T \subseteq M$. 
An allocation $\mathcal{A} = (A_1, \dots, A_n)$ is \emph{envy-free up to one chore} (EF1 for chores) if for every pair of agents $i,j \in N$, either $A_i = \emptyset$, or $v_i(A_i \setminus \{c\}) \geq v_i(A_j)$ some $c \in A_i$~\cite{aziz2022fair,Bhaskar2021}. 
For identical valuations, it is easy to see that the existence of a maximal EF1 allocation is equivalent for goods and chores.
Specifically, an allocation $\mathcal{A}$ is envy-free up to one chore under valuation $v$ if and only if $\mathcal{A}$ is envy-free up to one good under valuation $-v$. Thus, a maximal allocation that is envy-free up to one chore exists for $v$ if and only if a maximal allocation that is envy-free up to one good exists for $-v$. 

Consider the case of two agents with monotone non-increasing valuations. Theorem~\ref{thm:cutchoose} holds in this case, so we can assume without loss of generality that the agents have identical valuations $v$. By combining the discussion above with the fact that $-v$ is monotone non-decreasing, Theorem \ref{thm:twoagents_main} guarantees the existence of a maximal EF1 allocation. Moreover, the algorithms presented in \Cref{thm:generic-alg,thm:twoagents_bipartite,thm:twoagents_interval} remain applicable in this setting.

For three or more agents with monotone non-increasing valuations, there exist instances where a maximal allocation does not exist. In fact, the instances given in Theorem~\ref{thm:counterexample_n4} and Proposition~\ref{thm:counterexample_n3} use an identical valuation $v$, and the corresponding instance obtained by replacing $v$ with $-v$ yields no maximal allocation that is EF1 for chores. Also, determining whether a maximal EF1 allocation exists or not is NP-hard also for monotone non-increasing valuations. This follows from Theorem \ref{thm:nphard} since the constructed instance uses an identical monotone non-decreasing valuation. 

\section{Conclusion}
While we give a nearly complete picture of the existence for maximal EF1 allocations for monotone valuations, there are still a few open interesting questions left. First, for $n = 2$, our algorithm (\Cref{thm:generic-alg}) runs in pseudo-polynomial time for general monotone valuations. Is there a polynomial-time (in $ m, T(m)$) algorithm for this task?

It might also be worthwhile considering special cases, which can result from restricting either the valuations or the graphs. We list a few intriguing cases below. 
\begin{itemize}
\item Additive valuations: The existence of maximal EF1 allocation remains open only for the case $n = 3$ since our lower bound in \Cref{thm:counterexample_n3} requires non-additive valuations, but the lower bound for $n \geq 4$ (\Cref{thm:counterexample_n4})  holds for additive (and identical) valuations.
\item Uniform valuations: All $n \geq 3$ remains open in this case as we are not aware of any lower bound that holds for uniform valuations. We also note that the well-known Hajnal-Szemerédi theorem (\cite{HS70}) is equivalent to stating that a maximal EF1 allocation exists when the graph has maximum degree at most $n - 1$. Thus, a positive answer to the question for arbitrary graphs will significantly generalize the theorem. We make progress on this question by proving the existence on trees (see Appendix~\ref{sec:uniform}).
\item Restricted Graph Classes: It might also be interesting to study special graph classes, such as bounded degree graphs (as suggested by the above Hajnal-Szemerédi theorem) or trees. Some other interesting graph classes include interval graphs--which was studied by Kumar et al.~\cite{KEG+24}--and bipartite graphs. 
\end{itemize}

\section*{Acknowledgements}
This work was partially supported by JST FOREST Grant Numbers JPMJPR20C1. We thank the anonymous IJCAI 2025 for their valuable comments.

\bibliographystyle{abbrv}

\begin{thebibliography}{99}

\bibitem{AAB+23fair}
G.~Amanatidis, H.~Aziz, G.~Birmpas, A.~Filos-Ratsikas, B.~Li, H.~Moulin, A.~A. Voudouris, and X.~Wu.
\newblock {Fair division of indivisible goods: recent progress and open questions}.
\newblock {\em Artificial Intelligence}, page 103965, 2023.

\bibitem{aziz2022fair}
H.~Aziz, I.~Caragiannis, A.~Igarashi, and T.~Walsh.
\newblock Fair allocation of indivisible goods and chores.
\newblock {\em Autonomous Agents and Multi-Agent Systems}, 36(1):1--21, 2022.

\bibitem{Bhaskar2021}
U.~Bhaskar, A.~R. Sricharan, and R.~Vaish.
\newblock {On Approximate Envy-Freeness for Indivisible Chores and Mixed Resources}.
\newblock In {\em Proceedings of Approximation, Randomization, and Combinatorial Optimization. Algorithms and Techniques (APPROX/RANDOM 2021)}, pages 1:1--1:23, 2021.

\bibitem{BiloCaFl22}
V.~Bil\`{o}, I.~Caragiannis, M.~Flammini, A.~Igarashi, G.~Monaco, D.~Peters, C.~Vinci, and W.~S. Zwicker.
\newblock Almost envy-free allocations with connected bundles.
\newblock {\em Games and Economic Behavior}, 131:197--221, 2022.

\bibitem{BiswasFORC2023}
A.~Biswas, Y.~Ke, S.~Khuller, and Q.~C. Liu.
\newblock {An algorithmic approach to address course enrollment challenges}.
\newblock In {\em Proceedings of 4th Symposium on Foundations of Responsible Computing (FORC 2023)}, pages 8:1--8:23, 2023.

\bibitem{BouveretCeEl17}
S.~Bouveret, K.~Cechl\'{a}rov\'{a}, E.~Elkind, A.~Igarashi, and D.~Peters.
\newblock Fair division of a graph.
\newblock In {\em Proceedings of the 26th International Joint Conference on Artificial Intelligence (IJCAI)}, pages 135--141, 2017.

\bibitem{BramsFi00}
S.~J. Brams and P.~C. Fishburn.
\newblock Fair division of indivisible items between two people with identical preferences: Envy-freeness, {P}areto-optimality, and equity.
\newblock {\em Social Choice and Welfare}, 17(2):247--267, 2000.

\bibitem{brams2014two}
S.~J. Brams, M.~Kilgour, and C.~Klamler.
\newblock Two-person fair division of indivisible items: An efficient, envy-free algorithm.
\newblock {\em Notices of the AMS}, 61(2):130--141, 2014.

\bibitem{BramsTa96}
S.~J. Brams and A.~D. Taylor.
\newblock {\em Fair division: from cake-cutting to dispute resolution}.
\newblock Cambridge University Press, 1996.

\bibitem{Budish11}
E.~Budish.
\newblock The combinatorial assignment problem: {A}pproximate competitive equilibrium from equal incomes.
\newblock {\em Journal of Political Economy}, 119(6):1061--1103, 2011.

\bibitem{BudishCaKe17}
E.~Budish, G.~P. Cachon, J.~B. Kessler, and A.~Othman.
\newblock Course {M}atch: a large-scale implementation of approximate competitive equilibrium from equal incomes for combinatorial allocation.
\newblock {\em Operations Research}, 65(2):314--336, 2017.

\bibitem{ChiarelliKMPPS20}
N.~Chiarelli, M.~Krnc, M.~Milanic, U.~Pferschy, N.~Pivac, and J.~Schauer.
\newblock Fair packing of independent sets.
\newblock {\em Algorithmica}, 85(5):1459--1489, 2023.

\bibitem{F67resource}
D.~Foley.
\newblock {Resource allocation and the public sector}.
\newblock {\em Yale Economic Essays}, pages 45--98, 1967.

\bibitem{GS58puzzle}
G.~Gamow and M.~Stern.
\newblock {Puzzle-Math}.
\newblock {\em Viking Press}, 1958.

\bibitem{GoldmanPr14}
J.~Goldman and A.~D. Procaccia.
\newblock Spliddit: Unleashing fair division algorithms.
\newblock {\em {ACM} {SIG}ecom Exchanges}, 13(2):41--46, 2014.

\bibitem{HS70}
A.~Hajnal and E.~Szemer\'edi.
\newblock Proof of a conjecture of {P}. {E}rd{\H{o}}s.
\newblock {\em Colloq Math Soc János Bolyai}, 4:601--623, 01 1970.

\bibitem{Hsu93}
W.-L. Hsu.
\newblock A simple test for interval graphs.
\newblock In {\em Graph-Theoretic Concepts in Computer Science: 18th International Workshop, WG'92 Wiesbaden-Naurod, Germany, June 18–20, 1992 Proceedings 18}, pages 11--16, 1993.

\bibitem{HummelH22}
H.~Hummel and M.~L. Hetland.
\newblock Fair allocation of conflicting items.
\newblock {\em Autonomous Agents and Multi-Agent Systems}, 36(1):8, 2022.

\bibitem{IgarashiYo23}
A.~Igarashi and T.~Yokoyama.
\newblock Kajibuntan: a house chore division app.
\newblock In {\em Proceedings of the 37th AAAI Conference on Artificial Intelligence (AAAI)}, pages 16449--16451, 2023.

\bibitem{Karp72}
R.~M. Karp.
\newblock Reducibility among combinatorial problems.
\newblock In R.~E. Miller and J.~W. Thatcher, editors, {\em Proceedings of a symposium on the Complexity of Computer Computations, held March 20-22, 1972, at the {IBM} Thomas J. Watson Research Center, Yorktown Heights, New York, {USA}}, The {IBM} Research Symposia Series, pages 85--103, 1972.

\bibitem{KEG+24}
Y.~Kumar, S.~Equbal, R.~Gurjar, S.~Nath, and R.~Vaish.
\newblock Fair scheduling of indivisible chores.
\newblock In {\em Proceedings of the 23rd International Conference on Autonomous Agents and Multi-Agent Systems (AAMAS)}, pages 2345--2347, 2024.
\newblock Extended version available at \url{https://arxiv.org/abs/2402.04353}.

\bibitem{LiFairScheduling2021}
B.~Li, M.~Li, and R.~Zhang.
\newblock Fair scheduling for time-dependent resources.
\newblock In {\em Proceedings of the 34th Annual Conference on Neural Information Processing Systems (NeurIPS)}, 2021.

\bibitem{LiptonMaMo04}
R.~J. Lipton, E.~Markakis, E.~Mossel, and A.~Saberi.
\newblock On approximately fair allocations of indivisible goods.
\newblock In {\em Proceedings of the 5th ACM Conference on Electronic Commerce (EC)}, pages 125--131, 2004.

\bibitem{Moon1965OnCI}
J.~W. Moon and L.~Moser.
\newblock On cliques in graphs.
\newblock {\em Israel Journal of Mathematics}, 3:23--28, 1965.

\bibitem{Steinhaus48}
H.~Steinhaus.
\newblock Sur la division pragmatique.
\newblock {\em Econometrica}, 17:315--319, 1949.

\bibitem{Suksompong21}
W.~Suksompong.
\newblock Constraints in fair division.
\newblock {\em ACM SIGecom Exchanges}, 19(2):46--61, 2021.

\end{thebibliography}

\appendix
\newpage

\begin{center}
    \Large{Appendix}    
\end{center}

\section{Two agents and Special Graph Classes}\label{appx:twoagents}


\paragraph{Bipartite Graphs.}
For two agents and special graph classes like bipartite graphs, we can find a maximal EF1 and allocation in polynomial time.

\begin{theorem}
When there are $n = 2$ agents and $G$ is bipartite, a maximal EF1 allocation can be found in $O\left(\left|E\right| + m \cdot T(m)\right)$ time. 
	\label{thm:twoagents_bipartite}
\end{theorem}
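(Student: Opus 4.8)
The plan is to reduce to identical valuations and then invoke \chainalg with a carefully chosen maximal independent set whose complement is forced to be independent. First, by the cut-and-choose reduction (\Cref{thm:cutchoose}), it suffices to solve the identical-valuation case, say with common valuation $v$, at an additive cost of $2T(m)$. I would then compute a $2$-coloring of $G$ in $O(|E| + m)$ time, writing $M = L \cup R \cup D$, where $(L, R)$ is a proper bipartition of the non-isolated vertices and $D$ is the set of isolated vertices.

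The crucial observation is that for a bipartite graph one of the two color classes already supplies the maximal independent set required by \Cref{lem:large-value-then-ef1}. Set $S_1 := L \cup D$ and $S_2 := R \cup D$. Both are independent, and both are \emph{maximal}: every vertex outside $S_1$ lies in $R$, is non-isolated, and hence has a neighbor in $L \subseteq S_1$ (and symmetrically for $S_2$). Moreover $M \setminus S_1 = R$ and $M \setminus S_2 = L$ are themselves independent, being subsets of a single color class. Consequently, when \chainalg is run on $S_1$ (resp.\ $S_2$), its greedy loops add \emph{every} vertex of the complement, since no two of them are adjacent, giving $X_1 = X_2 = M \setminus S$. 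Thus the two-inequality hypothesis of \Cref{lem:large-value-then-ef1} collapses to the single inequality $v(S) \geq v(M \setminus S)$. Since $v$ is monotone, $v(S_1) \geq v(L)$ and $v(S_2) \geq v(R)$, so comparing $v(L)$ and $v(R)$ with two oracle calls and taking $S := S_1$ when $v(L) \geq v(R)$ and $S := S_2$ otherwise always satisfies it. \Cref{lem:large-value-then-ef1} then guarantees that \chainalg on $S$ outputs a maximal EF1 allocation, establishing existence.

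For the running time, running \chainalg verbatim is too slow, because testing EF1 for every allocation $\cA^{(i)}$ costs $\Theta(m\cdot T(m))$ per index and $\Theta(m^2 T(m))$ overall. Instead I would use that only a single sign flip is needed. After computing $p_t, q_t$ for all $t \in M \setminus S$ by one edge scan (in $O(|E|)$ time), I maintain the bundles $A^{(i)}_1, A^{(i)}_2$ incrementally: consecutive bundles differ only by $s_i$ and by the vertices $t$ with $q_t = i$ or $p_t = i$, so the total number of updates over all $i$ is $O(m)$. Scanning $i = 0, 1, \dots, k$ and computing $v(A^{(i)}_1)$ and $v(A^{(i)}_2)$ until the first index $i$ with $v(A^{(i)}_1) \leq v(A^{(i)}_2)$ — which exists because the chain is gapless by \Cref{lem:alg-chain} — uses $O(m)$ oracle calls, i.e.\ $O(m \cdot T(m))$ time. \Cref{lem:chain_lemma1}, applied to the \oadj pair $(\cA^{(i-1)}, \cA^{(i)})$, guarantees that one of the two is EF1, so I finish by explicitly testing EF1 on just these two allocations; each test evaluates $\min_{g} v(\cdot \setminus \{g\})$ on a bundle via at most $m$ oracle calls, costing $O(m \cdot T(m))$. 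Altogether this gives $O(|E| + m \cdot T(m))$.

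I expect two points to require the most care. The first is the structural claim: the isolated vertices must be placed so that whichever side is chosen stays a \emph{maximal} independent set while its complement remains independent (forcing $X_1 = X_2$); this is exactly what reduces the hypothesis of \Cref{lem:large-value-then-ef1} to one comparison decidable with two oracle calls. The second is the running-time refinement: the target bound $O(|E| + m \cdot T(m))$ cannot absorb the naive per-allocation EF1 test embedded in \chainalg, so the argument must replace the full loop by locating one sign flip with incrementally maintained bundles and checking EF1 only there.
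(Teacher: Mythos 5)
Your proof is correct and follows essentially the same route as the paper's: take the side of the bipartition with larger value (absorbing the isolated vertices so that it is a \emph{maximal} independent set) as $S$, observe that $X_1, X_2$ are contained in the other side so the hypothesis of \Cref{lem:large-value-then-ef1} holds, and run \chainalg. The one place you go beyond the paper is the running-time refinement: the paper dismisses this as obvious, whereas you correctly note that the literal \chainalg tests EF1 at every index (costing $\Theta(m^2 \cdot T(m))$ in the worst case) and that the claimed $O(|E| + m \cdot T(m))$ bound requires instead locating the sign flip of $v(A^{(i)}_1) - v(A^{(i)}_2)$ with $O(m)$ value queries over incrementally maintained bundles and testing EF1 only at the two allocations there --- a legitimate and welcome piece of care.
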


\begin{proof}
Let $M_1, M_2$ the two parts of bipartite graphs that $v(M_1) \geq v(M_2)$; we may assume w.l.o.g. that $M_1$ contains all vertices of degree zero. The algorithm is simply to run \Cref{alg:ef1-chain} with $S = M_1$. The running time claim is obvious. Meanwhile, since $X_1, X_2 \subseteq M_2$, we have that $v(X_1), v(X_2) \leq v(M_2) \leq v(M_1) = v(S)$. Thus, \Cref{lem:alg-chain} guarantees that the algorithm finds an EF1 allocation.
\end{proof}

\paragraph{Interval Graphs.}
For interval graphs, Kumar et al.~\cite{KEG+24} proved that a maximal EF1 allocation exists among two agents and can be computed in polynomial time. They extended the idea presented in Figure \ref{fig:chain_pathgraph} for path graphs to interval graphs. However, unlike paths, there is an issue that the allocation after the slight change may not be maximal. Kumar et al. addressed this issue by dividing the structure of intervals into five cases, showing that there is a way to amend the allocation to satisfy maximality for each case. However, their proof is technically involved. Here, we provide a simpler proof by utilizing Theorem~\ref{thm:twoagents_main}.

\begin{restatable}{theorem}{IntervalTwo}
	When there are $n = 2$ agents with monotone valuations and $G$ is an interval graph, a maximal EF1 allocation can be found in $O(m \log m + m \cdot T(m))$ time.
	\label{thm:twoagents_interval}
\end{restatable}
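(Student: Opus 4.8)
The plan is to first reduce to identical monotone valuations via \Cref{thm:cutchoose}, so we may assume a single monotone $v$. Existence of a maximal EF1 allocation is then immediate from \Cref{thm:twoagents_main}; the entire content of the statement is therefore the running time, and in particular producing a gapless chain and scanning it in $O(m\log m + m\cdot T(m))$ time, crucially without ever touching all $|E|$ edges. The first step is to compute an interval representation of $G$ (recognizing interval graphs and extracting intervals can be done after sorting the $2m$ endpoints, in $O(m\log m)$). The structural payoff I would extract is the following contiguity fact: if we list the goods of any maximal independent set $S=\{s_1,\dots,s_k\}$ in left-to-right order of their intervals, then for every $t\in M\setminus S$ the neighbour-index set $\Gamma_t$ is a contiguous block $[p_t,q_t]$, since $t$ overlaps $s_{p_t}$ and $s_{q_t}$ and hence overlaps every $s_j$ lying positionally between them.

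Granting contiguity, I would run \Cref{alg:ef1-chain} a single time on a suitable maximal independent set $S$, implemented as one left-to-right sweep over the sorted endpoints: the values $p_t,q_t$ for all $t\notin S$ and the greedy sets $X_1,X_2$ are produced by the sweep in $O(m\log m)$, and the $k+1$ candidate allocations $\cA^{(0)},\dots,\cA^{(k)}$ are examined in turn, each EF1 test costing a constant number of $v$-evaluations, i.e. $O(T(m))$. By \Cref{lem:alg-maximal} every candidate is automatically a valid maximal allocation, and by the \oadj part of \Cref{lem:alg-chain} consecutive candidates are \oadj, \emph{regardless} of which $S$ we feed in. Thus the implementation already meets the time bound, and it remains only to guarantee that the scan actually meets an EF1 allocation.

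By \Cref{lem:chain_lemma2,lem:large-value-then-ef1} it suffices to choose $S$ so that the chain is gapless, i.e. $v(S)\ge v(X_1)$ and $v(S)\ge v(X_2)$; this is exactly where the interval structure must do work, and it is the main obstacle. One cannot simply take the globally value-maximizing maximal independent set (as in the proof of \Cref{thm:twoagents_main}): for general monotone $v$ given by an oracle this is intractable even on interval graphs, since a disjoint union of edges is an interval graph with $2^{\Theta(m)}$ maximal independent sets, and pinning down the best one can force exponentially many oracle calls. Instead I would select $S$ from the interval representation itself (for instance a maximum-cardinality maximal independent set obtained by the standard earliest-finishing-time sweep, with value-aware tie-breaking) and prove the domination $v(S)\ge v(X_1),v(X_2)$ by a charging argument built on contiguity: every good $t$ of the greedy set $X_1$ (resp. $X_2$) is pinned by its block $[p_t,q_t]$ to a region already accounted for by $S$, which should let one embed $X_1$ (resp. $X_2$) into an independent set no more valuable than $S$ and then invoke monotonicity. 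Establishing this domination lemma cleanly is the crux, and is the analogue of the five-case maximality analysis of Kumar et al., now streamlined by the relaxed notion of \oadj in \Cref{def:ordered}.

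If a single self-consistent $S$ proves awkward, an alternative I would keep in reserve is to sidestep value-domination entirely by constructing a chain whose two endpoints are swaps of one another (oriented so that the agent-$1$-heavier endpoint comes first): then the sign of $v(A_1)-v(A_2)$ flips for free between the ends, so \Cref{lem:chain_lemma1,lem:chain_lemma2} apply with no value assumption on $S$ at all. Concretely one would try to combine the forward sweep of \Cref{alg:ef1-chain} with its left-to-right mirror, which by the symmetry of interval representations produces the agent-swapped configuration, and then argue that every intermediate allocation stays maximal and that consecutive allocations remain \oadj. Either way, the only genuinely delicate point is maintaining maximality across the one-good transfers while preserving the sign-flip guarantee, and I expect the contiguity fact together with the relaxed \oadj to be precisely what makes this manageable on interval graphs.
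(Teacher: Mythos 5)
There is a genuine gap. Your primary route hinges on a domination lemma $v(S)\ge v(X_1),v(X_2)$ for an $S$ computable from the interval representation alone, and you correctly flag this as the crux --- but you never establish it, and it is doubtful that it \emph{can} be established: for a disjoint union of edges (which is an interval graph), a monotone valuation given by an oracle can hide its value in an arbitrary one of the $2^{\Theta(m)}$ maximal independent sets, and no singleton-value-aware sweep can be guaranteed to find a dominating $S$ (singleton values do not determine set values for general monotone $v$). The paper's proof deliberately avoids any such domination claim. Your ``reserve'' route --- a chain whose two endpoints are swaps of one another, so the sign flip comes for free --- is exactly the paper's approach, but the part you wave at (``argue that every intermediate allocation stays maximal'') is precisely where all the work lies, and ``combining the forward sweep with its mirror'' does not obviously produce swapped endpoints.

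Concretely, the paper's construction is: take $Z$ a maximum-size set of intervals with no point covered more than twice ($c=2$ interval scheduling), split it by alternating right endpoints into $Z_1,Z_2$ with $v(Z_1)\ge v(Z_2)$, augment $Z_1$ to a maximal independent set, and run \Cref{alg:ef1-chain} with $S=Z_1$ to get a chain from $(Z_1,X'_2)$ to $(X'_1,Z_1)$. The key enabling facts are that $Z_2$, $X'_1$, and $X'_2$ are all \emph{optimal} solutions of the $c=1$ scheduling problem on $M\setminus Z_1$ (optimality of $Z_2$ follows from optimality of $Z$ for $c=2$), and that \Cref{lem:interval_lemma2} lets one morph the greedy solution into any optimal solution one interval at a time while every intermediate set remains optimal, hence maximal against $Z_1$. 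This yields the two extension chains $\mathcal{C}_2$ (from $(Z_1,Z_2)$ to $(Z_1,X'_2)$) and $\mathcal{C}_3$ (from $(X'_1,Z_1)$ to $(Z_2,Z_1)$), producing a gapless chain with swapped endpoints. None of this machinery --- the $c=2$ relaxation, the alternating split, or the greedy-to-optimal interpolation lemma --- appears in your proposal, so the maximality of the intermediate allocations in your swapped-endpoint chain is unsupported. (A minor additional point: a single EF1 test is not a constant number of oracle calls; the stated time bound is met because one only needs to test the one or two allocations at the sign flip, each costing $O(m\cdot T(m))$.)
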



We consider (generalized) interval scheduling problem as a base.

\begin{description}
	\item[\textsc{IntervalScheduling}.] Given $n$ intervals $[l_1, r_1), \dots,$ $[l_n, r_n)$ and an integer $c \geq 1$, find a subset of intervals with maximum size such that no point is covered by more than $c$ intervals. For simplicity, we assume that $l_1, r_1, \dots, l_n, r_n$ are all distinct.
\end{description}

This can be solved by greedy algorithm: First, we sort intervals so that $r_1 < \dots < r_n$. Then, for $i = 1, \dots, n$ in this order, if $[l_i, r_i)$ can be added while satisfying the constraint that each point is  covered by at most $c$ intervals, add this to the current solution.

Let's understand the structure of greedy solution. Let $[l'_1, r'_1), \dots, [l'_{k'}, r'_{k'}) \ (r'_1 < \dots < r'_{k'})$ be any valid solution. For convenience, $l'_i, r'_i$ for $i \geq k'+1$ are set to $+\infty$. Also, let $[l^*_1, r^*_1), \dots, [l^*_k, r^*_k) \ (r^*_1 < \dots < r^*_k)$ be the greedy solution.

\begin{lemma}
	$r^*_i \leq r'_i$ holds for any $i \in \{1, \dots, k\}$.
	\label{lem:interval_lemma1}
\end{lemma}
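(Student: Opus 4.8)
The plan is to prove this standard greedy-exchange lemma by induction on $i$. The statement $r^*_i \leq r'_i$ says that the greedy solution, which always picks the interval with the smallest right endpoint that can still be feasibly added, ``stays ahead'' of any other valid solution in terms of how early its $i$-th (sorted by right endpoint) interval finishes. Since both sequences are sorted so that $r^*_1 < \cdots < r^*_k$ and $r'_1 < \cdots < r'_{k'}$ (with the convention $r'_i = +\infty$ for $i > k'$), this is the natural ``greedy stays ahead'' invariant adapted to the capacity-$c$ setting.

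First I would set up the induction. For the base case $i = 1$, I would argue that $r^*_1 \leq r'_1$: the greedy algorithm processes intervals in increasing order of right endpoint, so the very first interval it adds is the one with the globally smallest right endpoint (it can always be added to an empty solution while respecting the capacity constraint, since a single interval never violates capacity $c \geq 1$). Hence $r^*_1$ is the smallest right endpoint among all intervals, which is certainly at most $r'_1$.

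For the inductive step, I would assume $r^*_{i-1} \leq r'_{i-1}$ and show $r^*_i \leq r'_i$. The key observation is that when greedy is about to select its $i$-th interval, it has already committed to $[l^*_1, r^*_1), \dots, [l^*_{i-1}, r^*_{i-1})$, all with right endpoints at most $r'_{i-1} \leq r'_i$. Consider the interval $[l'_i, r'_i)$ from the comparison solution. I would argue that at the point in the greedy scan where $[l'_i, r'_i)$ is (or would be) considered, adding it to greedy's current partial solution does not violate the capacity constraint: any point covered by $[l'_i, r'_i)$ can be covered by at most $c-1$ of greedy's already-chosen intervals, because those intervals either coincide with intervals in $\{[l'_1,r'_1),\dots,[l'_{i-1},r'_{i-1})\}$ in their covering pattern at that point or finish earlier, and the comparison solution itself respects capacity $c$. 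Thus $[l'_i, r'_i)$ is a feasible candidate at that moment, so greedy selects some interval with right endpoint at most $r'_i$, giving $r^*_i \leq r'_i$.

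The main obstacle I anticipate is making the capacity-counting argument in the inductive step fully rigorous: I must carefully show that greedy's first $i-1$ intervals cannot ``use up'' all $c$ units of capacity at any point covered by $[l'_i, r'_i)$. The clean way is to fix an arbitrary point $x \in [l'_i, r'_i)$ and bound the number of greedy intervals among the first $i-1$ that cover $x$; using the inductive hypothesis $r^*_{i-1} \leq r'_{i-1}$ together with the fact that the valid solution $[l'_1, r'_1), \dots, [l'_i, r'_i)$ covers $x$ with at most $c$ intervals, one shows greedy covers $x$ with at most $c - 1$ of its first $i-1$ intervals, leaving room to add $[l'_i, r'_i)$. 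Establishing this point-by-point capacity bound—rather than the simpler disjointness argument available in the classical $c=1$ interval scheduling problem—is the delicate part, and I would devote the bulk of the proof to it.
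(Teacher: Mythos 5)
Your overall strategy---induction on $i$, with the inductive step arguing that $[l'_i,r'_i)$ could still be feasibly added to greedy's partial solution at the moment greedy reaches it, so that greedy's $i$-th pick finishes by $r'_i$---is exactly the paper's strategy, and your base case matches the paper's. The problem is that the one step carrying all the content, namely that every point $x\in[l'_i,r'_i)$ is covered by at most $c-1$ of $[l^*_1,r^*_1),\dots,[l^*_{i-1},r^*_{i-1})$, is precisely the step you defer, and the ingredients you list for it do not suffice. You assume only $r^*_{i-1}\le r'_{i-1}$ (weak induction), and you propose to transfer coverage of $x$ from greedy's intervals to the comparison solution's intervals on the grounds that greedy's intervals ``coincide in their covering pattern or finish earlier.'' But the inductive hypothesis controls only right endpoints: $r^*_j\le r'_j$ says nothing about $l^*_j$ versus $l'_j$, so $[l^*_j,r^*_j)$ can cover $x$ while $[l'_j,r'_j)$ does not. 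Hence the number of greedy intervals covering $x$ is not bounded by the number of comparison intervals covering $x$, and the capacity bound on the comparison solution cannot be invoked pointwise in the way you describe.

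The paper closes this step by a different mechanism that in particular needs the \emph{strong} induction hypothesis $c$ positions back: from $r^*_{i-c}\le r'_{i-c}$ together with the structural claim $r'_{i-c}\le l'_i$, it concludes that every greedy interval of index at most $i-c$ lies entirely to the left of $l'_i$, so only the $c-1$ intervals of index $i-c+1,\dots,i-1$ can even intersect $[l'_i,r'_i)$, which gives the coverage bound at once. So at a minimum you must strengthen your hypothesis to $r^*_j\le r'_j$ for all $j<i$ and supply an argument of this ``push everything before index $i-c$ out of the way'' type. Be aware that this step is genuinely delicate: even the paper's own justification of $r'_{i-c}\le l'_i$ (``because each point is covered by at most $c$ intervals'') deserves scrutiny for $c\ge 2$, since a feasible family sorted by right endpoints need not have $[l'_{i-c},r'_{i-c})$ and $[l'_i,r'_i)$ disjoint; for instance with $c=2$ the family $[0,10)$, $[10.5,11)$, $[9,12)$ covers no point more than twice yet has $r'_1=10>9=l'_3$. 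In short, your write-up correctly locates the difficulty but does not resolve it, and the specific plan you sketch for resolving it would fail.
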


\begin{proof}
	We prove by induction of $i$.
	\begin{itemize}
		\item \textbf{Base Case $(i = 1)$}: $r^*_1 \leq r'_1$ holds, because the greedy algorithm first chooses the interval $[l_i, r_i)$ with the smallest $r_i$.
		\item \textbf{Induction $(i \geq 2)$}: We assume that $r^*_1 \leq r'_1, \dots, r^*_{i-1} \leq r'_{i-1}$ holds. Suppose that $r^*_i > r'_i$. Because each point is covered by at most $c$ intervals, $r'_{i-c} \leq l'_i$ must hold. The assumption $r^*_{i-c} \leq r'_{i-c}$ implies that $r^*_{i-c} \leq l'_i$, so $[l^*_1, r^*_1), \dots, [l^*_{i-1}, r^*_{i-1}), [l'_i, r'_i)$ is a valid solution. Because of the construction of the greedy algorithm, $[l'_i, r'_i)$ should have been added before $[l^*_i, r^*_i)$, a contradiction. Therefore, $r^*_i \leq r'_i$.~\qedhere
	\end{itemize}
\end{proof}
The above result also implies that the greedy solution is optimal, i.e. $k \geq k'$.

\begin{lemma}
	Consider the interval scheduling problem with $c = 1$. Suppose that $[l'_1, r'_1), \dots, [l'_k, r'_k)$ is an optimal solution. Then, for any $i \in \{0, \dots, k\}$, $[l^*_1, r^*_1), \dots, [l^*_{i-1}, r^*_{i-1}), [l'_i, r'_i), \dots, [l'_k, r'_k)$ is an optimal solution.
	\label{lem:interval_lemma2}
\end{lemma}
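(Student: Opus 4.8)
\textbf{Proof proposal for Lemma~\ref{lem:interval_lemma2}.}

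The plan is to argue by induction on $i$, peeling off one greedy interval at a time while maintaining the invariant that the hybrid list $[l^*_1, r^*_1), \dots, [l^*_{i-1}, r^*_{i-1}), [l'_i, r'_i), \dots, [l'_k, r'_k)$ is both feasible (for $c = 1$, i.e. the chosen intervals are pairwise disjoint) and of size $k$ (hence optimal, since the greedy solution has size $k$ and no feasible solution can be larger). The base case $i = 0$ is just the given optimal solution $[l'_1, r'_1), \dots, [l'_k, r'_k)$, and the case $i = 1$ is handled by noting $r^*_1 \le r'_1$ from \Cref{lem:interval_lemma1}. The size is preserved automatically at each step because we always swap exactly one interval for one interval, so the real content is showing feasibility is preserved.

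First I would fix the inductive step: assume the hybrid list at stage $i-1$, namely $[l^*_1, r^*_1), \dots, [l^*_{i-2}, r^*_{i-2}), [l'_{i-1}, r'_{i-1}), [l'_i, r'_i), \dots, [l'_k, r'_k)$, is a valid optimal solution, and I want to replace $[l'_{i-1}, r'_{i-1})$ by $[l^*_{i-1}, r^*_{i-1})$. Since $c = 1$, feasibility means consecutive intervals in right-endpoint order are disjoint, so I must check two things: that $[l^*_{i-1}, r^*_{i-1})$ does not overlap the preceding interval $[l^*_{i-2}, r^*_{i-2})$, and that it does not overlap the following interval $[l'_i, r'_i)$. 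The former holds because $[l^*_{i-2}, r^*_{i-2})$ and $[l^*_{i-1}, r^*_{i-1})$ are consecutive intervals of the greedy solution, which is itself feasible. The latter is where \Cref{lem:interval_lemma1} does the work: I would show $r^*_{i-1} \le l'_i$. By the induction hypothesis the hybrid list at stage $i-1$ is feasible, so $[l'_{i-1}, r'_{i-1})$ and $[l'_i, r'_i)$ are disjoint, giving $r'_{i-1} \le l'_i$; combining this with \Cref{lem:interval_lemma1}'s bound $r^*_{i-1} \le r'_{i-1}$ yields $r^*_{i-1} \le l'_i$ as desired.

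The main obstacle I anticipate is being careful about ordering and indexing: the hybrid list is stated in a fixed left-to-right order, and I need the greedy prefix $[l^*_1, r^*_1), \dots, [l^*_{i-1}, r^*_{i-1})$ to sit correctly (in increasing right-endpoint order) immediately before the suffix $[l'_i, r'_i), \dots$ without any out-of-order overlaps sneaking in between non-adjacent intervals. For $c = 1$ this reduces cleanly to the two adjacency checks above, since pairwise disjointness of a right-endpoint-sorted list is equivalent to disjointness of consecutive pairs; the only subtle inequality is the junction $r^*_{i-1} \le l'_i$, and everything else is bookkeeping. I would also remark that the suffix intervals $[l'_i, r'_i), \dots, [l'_k, r'_k)$ are unchanged, so their mutual feasibility is inherited directly from the original optimal solution, leaving only the single new junction to verify at each step.
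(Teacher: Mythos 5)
Your proposal is correct and matches the paper's argument: the essential content in both is the junction inequality $r^*_{i-1} \leq r'_{i-1} < l'_i$, obtained by combining \Cref{lem:interval_lemma1} with the disjointness of consecutive intervals in the optimal solution, together with the observation that the hybrid list still has $k$ intervals. The paper states this directly rather than wrapping it in an induction on $i$, but your inductive step reduces to exactly the same check, so the two proofs are essentially identical.
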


\begin{proof}
	The inequality that $r^*_{i-1} \leq r'_{i-1}$ holds due to \textbf{Lemma \ref{lem:interval_lemma1}}, and $r'_{i-1} < l'_i$  holds because $[l'_{i-1}, r'_{i-1})$ and $[l'_i, r'_i)$ do not overlap. Therefore, $r^*_{i-1} < l'_i$ holds, so the intervals do not overlap. Also, the number of chosen intervals is $k$, so it is an optimal solution.
\end{proof}

\begin{algorithm}
    \caption{$\textsc{IntervalEF1}(G = (M, E), v)$}
    \begin{algorithmic}[1]
        \State $[l_1, r_1), \dots, [l_m, r_m) \gets$ intervals corresponding to each vertex in $G$ such that $l_1, r_1, \dots, l_m, r_m$ are distinct
        \State $Z \gets$ an optimal solution of \textsc{IntervalScheduling} with $c = 2$ \Comment{Let $Z = \{z_1, \dots, z_k\}$ to be $r_{z_1} < \dots < r_{z_k}$}
        \State $Z_1 \gets \{z_1, z_3, z_5, \dots\}$
        \State $Z_2 \gets \{z_2, z_4, z_6, \dots\}$
        \If{$v(Z_1) < v(Z_2)$} swap $Z_1, Z_2$ \EndIf
        \For{$z \in Z_2$}
            \If{intervals contained in $Z_1 \cup \{z\}$ do not overlap}
                \State Add $z$ to $Z_1$ and remove $z$ from $Z_2$
            \EndIf
        \EndFor
        \State $X'_1, X'_2 \gets$ a ``greedy solution'' and a ``greedy solution from the opposite direction'' for \textsc{Interval Scheduling} of intervals in $M \setminus Z_1$ with $c = 1$
        \State $\mathcal{C}_1 \gets$ chain of allocations generated by Algorithm \ref{alg:ef1-chain} setting $S = Z_1, X_1 = X'_1, X_2 = X'_2$
        \State $\mathcal{C}_2 \gets$ a chain of allocations which starts with $(Z_1, Z_2)$ and ends with $(Z_1, X'_2)$
        \State $\mathcal{C}_3 \gets$ a chain of allocations which starts with $(X'_1, Z_1)$ and ends with $(Z_2, Z_1)$
        \State $\mathcal{C} \gets$ chain of allocations created by concatenating $\mathcal{C}_2, \mathcal{C}_1, \mathcal{C}_3$ in this order \Comment{$\mathcal{C}$ is a gapless chain}
        \State \Return any maximal EF1 allocation in $\mathcal{C}$
    \end{algorithmic}
\end{algorithm}

\begin{proof}[Proof of Theorem~\ref{thm:twoagents_interval}]
    We show that \textsc{IntervalEF1} returns a maximal EF1 allocation. First, the resulting $Z_1, Z_2$ meet the following conditions:
    \begin{itemize}
        \item $Z_1$ is maximal in $M$. Due to lines 6-8, any $z \in Z_2$ cannot be added to $Z_1$. Also, if $z \in Z \setminus (Z_1 \cup Z_2)$ can be added, this contradicts the fact that $Z$ is an optimal solution for \textsc{IntervalScheduling} with $c = 2$.
        \item $v(Z_1) \geq v(Z_2)$ due to line 5 and monotonicity.
    \end{itemize}
    
    Next, we consider $\mathcal{C}_1$. In Algorithm \ref{alg:ef1-chain}, when $S$ is sorted, i.e. $l_{s_1} < r_{s_1} < \dots < l_{s_k} < r_{s_k}$, it suffices to regard $p_t$ as $l_t$ and $q_t$ as $r_t$.\footnote{Note that, for example, the actual $p_t$ can take the same value even if $l_t$'s are different, but it does not affect the construction of $X_2$ because ties of $p_t$ can be resolved in any way.} Then, $X_1$ becomes a greedy solution, and $X_2$ becomes a greedy solution from opposite direction. Hence, the construction of $\mathcal{C}_1$ is justified.

    Finally, we consider $\mathcal{C}_3$. We use Lemma \ref{lem:interval_lemma2} to construct this chain. We note that both $X'_1$ and $Z_2$ are optimal solutions for \textsc{IntervalScheduling} of intervals in $M \setminus Z_1$.
    \begin{itemize}
        \item $X'_1$ is an optimal solution because the greedy solution is optimal (Lemma \ref{lem:interval_lemma1}).
        \item $Z_2$ is an optimal solution because otherwise, this would contradict the optimality of $Z$ (for \textsc{IntervalScheduling} with $c = 2$).
    \end{itemize}
    We use Lemma \ref{lem:interval_lemma2} by setting $[l'_1, r'_1), \dots, [l'_k, r'_k)$ to intervals of $Z_2$. We create $\mathcal{C}_3$ by arranging the following allocation for $i = k, \dots, 0$ in this order.
    \begin{equation*}
        (\{[l^*_1, r^*_1), \dots, [l^*_{i-1}, r^*_{i-1}), [l'_i, r'_i), \dots, [l'_k, r'_k)\}, Z_1)
    \end{equation*}
    They are maximal due to Lemma \ref{lem:interval_lemma2}. By symmetry, we can construct $\mathcal{C}_2$ in a similar way.

    Combining all these facts, $\mathcal{C}$ is a gapless chain. Note that $\mathcal{C}$ starts with $(Z_1, Z_2)$ and ends with $(Z_2, Z_1)$, where $v(Z_1) \geq v(Z_2)$ is ensured.

    Finally, we analyze the running time of the algorithm.
    \begin{itemize}
        \item Reconstructing $[l_1, r_1), \dots, [l_m, r_m)$ can be done in $O(m)$ time, due to \cite{Hsu93}.
        \item Obtaining an optimal solution of \textsc{IntervalScheduling} can be done in $O(m)$ time by the greedy algorithm.
        \item Making $Z_1$ maximal (lines 6-8) can be done in $O(m \log m)$ time by using binary search tree.
        \item The length of $\mathcal{C}$ is at most $3m+1$, so it takes $O(m \cdot T(m))$ time to find a maximal EF1 allocation among them.
    \end{itemize}
    Thus, the algorithm runs in $O(m \log m + m \cdot T(m))$ time.
\end{proof}



\section{Uniform Valuations}\label{sec:uniform}

We consider the case of uniform valuations, i.e., $v_i(S) = \left|S\right|$ for all $S \subseteq M$. For such valuations, our problem is closely related to an \emph{equitable coloring} of a graph, where the number of vertices assigned to each color differs by at most $1$. The key difference is that an  \emph{equitable coloring} must be complete---no vertex can remain uncolored---whereas in our problem, we aim to find a \emph{maximal equitable (partial) coloring} defined as follows.  


Given a graph $G = (V, E)$ and $n \in \N$, a \emph{maximal equitable $n$-coloring} is a subpartition $(S_1, \dots, S_n)$ of $V$, where $S_1, \dots, S_n \subseteq V$, with the following conditions:
    \begin{itemize}
        \item $S_1, \dots, S_n$ are disjoint.
        \item (independence) For any pair of distinct vertices $x, y \in S_i$, the vertices $x, y$ are not adjacent.
        \item (maximality) For every $v \in V \setminus (S_1 \cup \dots \cup S_n)$ and $i \in \{1, \dots, n\}$, $v$ has an adjacent vertex in $S_i$. 
        \item (equitability) $\max_{i \in [n]} \left|S_i\right| - \min_{i \in [n]} \left|S_i\right| \leq 1$.
    \end{itemize}

Below, we prove that a maximal equitable $n$-coloring exists when $G$ is a tree. It remains an open problem whether this can be extended to all graphs.

\begin{theorem}
For every tree $T = (V, E)$ and every $n \in \N$, there exists a maximal equitable $n$-coloring of $T$.
    \label{thm:uni-tree2}
\end{theorem}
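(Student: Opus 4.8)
## Proof Proposal for Theorem (Maximal Equitable $n$-Coloring of Trees)

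\textbf{Overview.}\textbf{Overall plan.} The plan is to prove the statement by induction on the number of vertices $|V|$, constructing the coloring by peeling off one leaf at a time. The base cases ($|V|\le 1$, or $n=1$) are immediate: for $n=1$ any maximal independent set works, and a single vertex is colored with color~$1$. For the inductive step I would first record a structural observation that drives the whole argument: a vertex $u$ can be left uncolored only if it has a neighbor in every one of the $n$ classes, so every vertex of degree $<n$ \emph{must} be colored. In particular, for $n\ge 2$ every leaf must be colored. I would then pick a leaf $\ell$ with unique neighbor $p$, apply the induction hypothesis to the tree $T'=T-\ell$ (removing a leaf keeps it a tree) to obtain a maximal equitable $n$-coloring $(S'_1,\dots,S'_n)$, and show how to incorporate $\ell$. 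Since $\ell$'s only neighbor is $p$, it may be placed into any class not containing $p$, of which there are at least $n-1\ge 1$; the issue is doing so while keeping both equitability and maximality.

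\textbf{The easy case.} Write $\mathrm{min}=\min_i|S'_i|$ and $\mathrm{max}=\max_i|S'_i|$, so $\mathrm{max}-\mathrm{min}\le 1$. If some class $S'_i$ with $p\notin S'_i$ has size $\mathrm{min}$, I would simply add $\ell$ to it. Its size becomes $\mathrm{min}+1$, the new sizes lie in $\{\mathrm{min},\mathrm{min}+1\}$, so equitability is preserved; and since we only \emph{enlarge} one class and $\ell$'s only neighbor $p$ is colored (hence not one of the blocked uncolored vertices), every previously uncolored vertex retains all of its colored neighbors, so maximality is preserved as well. The difficulty arises precisely when this fails, i.e.\ when every available class has size $\mathrm{min}+1$; by equitability this forces $|S'_j|=\mathrm{min}$ for the unique class $S'_j\ni p$ and $|S'_i|=\mathrm{min}+1$ for all $i\ne j$. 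Here adding $\ell$ to any available class would create a size of $\mathrm{min}+2$ against the min-class $S'_j$, violating equitability, so a rebalancing move is required.

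\textbf{The hard case (rebalancing).} In this case I would try two complementary moves. The quick one: if some color $i\ne j$ is \emph{absent} from the neighborhood of $p$, recolor $p$ from $j$ to $i$ and place $\ell$ into class $j$; this is a proper recoloring (no neighbor of $p$ has color $i$, and $\ell$ is only adjacent to $p$, now colored $i\ne j$), it leaves $|S'_j|$ unchanged at $\mathrm{min}$ only momentarily and after inserting $\ell$ balances the sizes. When every color $i\ne j$ occurs at $p$ (so $p$ has large degree), I would instead use a Kempe-chain argument: the subgraph induced by $S'_i\cup S'_j$ is a forest, and I would locate a connected component in which color $i$ outnumbers color $j$ by exactly one and swap $i\leftrightarrow j$ on it, decreasing $|S'_i|$ by one and increasing $|S'_j|$ by one; choosing this component so that it does not contain $p$ keeps $p$ in color $j$, makes $S'_i$ a min-size class avoiding $p$, and reduces to the easy case by adding $\ell$ to $S'_i$.

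\textbf{Main obstacle.} The crux of the proof — and the step I expect to be genuinely delicate — is that a Kempe swap, although it automatically preserves properness and equitability, can \emph{break maximality}: an uncolored vertex $u$ with a single $i$-neighbor inside the swapped component and its $j$-neighbors outside would lose all $i$-neighbors after the swap. The plan to overcome this is to exploit the forest structure of two-color-induced subgraphs of a tree, choosing the Kempe component minimally (e.g.\ a component incident to a leaf of that forest, so the swap is ``local'' along a path), and to observe that any vertex that becomes unblocked for color $i$ by the swap can then itself be colored $i$, turning a maximality defect into a legal coloring move; controlling the possible cascade of such moves, and ensuring it terminates without disturbing equitability, is where the real work lies. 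An alternative I would keep in reserve is to strengthen the induction hypothesis (for instance, to control the color or uncolored-status of a designated vertex such as $p$) so that the problematic configuration $|S'_j|=\mathrm{min}$ with $p\in S'_j$ never needs to be repaired after the fact.
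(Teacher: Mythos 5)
Your overall strategy (peel off a leaf $\ell$ with parent $p$, apply induction to $T-\ell$, then reinsert $\ell$) is different from the paper's, which roots the tree, strengthens the induction hypothesis to ``the root is uncolored or lies in a largest color class,'' and merges the colorings of the root's subtrees by cyclically shifting color indices. Your easy case is fine, and your structural observation that every vertex of degree less than $n$ must be colored is correct. But the hard case is where the theorem actually lives, and both of your repair moves fail there. The quick move is arithmetically wrong: recoloring $p$ from $j$ to $i$ and inserting $\ell$ into class $j$ leaves $|S_i|=\min+2$ while $|S_j|=\min$, violating equitability. The Kempe move rests on two unestablished claims --- that a component of the $\{i,j\}$-induced forest exists in which $i$ outnumbers $j$ by exactly one (component imbalances can be arbitrary integers summing to $1$, so no single component need have imbalance exactly $1$) and that such a component can be chosen to avoid $p$ --- and, as you yourself note, it can destroy maximality in a way you do not control.

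More importantly, your toolkit is provably insufficient, not merely incomplete. Take $n=2$ and $T=K_{1,3}$ with center $p$ and leaves $a,b,\ell$. The induction hypothesis applied to $T-\ell$ (the path $a\text{--}p\text{--}b$) may legitimately return the maximal equitable coloring $S_1=\{p\}$, $S_2=\{a,b\}$. This is exactly your hard case: the unique minimum class contains $p$, color $2$ appears on every neighbor of $p$ (so no recoloring of $p$ is proper), and the only Kempe component of $S_1\cup S_2$ is the whole path, which contains $p$. Every maximal equitable $2$-coloring of $K_{1,3}$ leaves $p$ \emph{uncolored} and colors all three leaves, so any correct repair must uncolor a previously colored vertex --- an operation none of your moves performs. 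The fix you mention only in passing, strengthening the induction hypothesis to control the status of a designated vertex, is essentially what the paper does (it guarantees the root is uncolored or in a largest class), and without it the leaf-peeling induction cannot be closed. As written, the proposal has a genuine gap.
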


\begin{proof}
    Consider $T$ as a rooted tree with root $r$. We prove a stronger fact that there exists a maximal equitable $n$-coloring of $T$ such that $r$ is colored with a \emph{higher color} or is uncolored. Here, for a coloring $\mathcal{S} = (S_1, \dots, S_n)$, we define that color $i$ is a \emph{higher color} if $|S_i| = \max_{j \in [n]} |S_j|$.

    We will prove this by strong induction on the number of vertices of $T 
= (V, E)$. The base case $|V| = 1$ is trivial.

    For the inductive step, consider any tree $T 
= (V, E)$ with $|V| > 1$ and suppose that 
the statement holds
for all trees with smaller number of vertices.
    Let $r_1, \dots, r_k$ be the children of $r$, and let $T_1, \dots, T_k$ be the subtree of $r_1, \dots, r_k$, respectively. By the inductive hypothesis, 
    for each $i$, there exists a maximal equitable $n$-coloring of $T_i$ where $r_i$ is a higher color or is uncolored; let it be $\mathcal{S}^{(i)} = (S^{(i)}_1, \dots, S^{(i)}_n)$. Without loss of generality, we assume $|S^{(i)}_1| \geq \dots \geq |S^{(i)}_n|$. We say that $T_i$ is a \emph{singular subtree} if $\mathcal{S}^{(i)}$ has only one higher color (color $1$) and $r_i$ has color $1$. Without loss of generality, for some $c$, $T_1, \dots, T_c$ are singular subtrees and $T_{c+1}, \dots, T_k$ are not.
    
    First, we aim to obtain an equitable $n$-coloring $\mathcal{S}$ of $T_1 \cup \dots \cup T_k$ in the following way:
    \begin{enumerate}
        \item Initialize $\mathcal{S} = (\emptyset, \dots, \emptyset)$ and $x = 0$.
        \item For $i = 1, \dots, k$, do the following:
        \begin{enumerate}
            \item $(S_1, \dots, S_n) \gets (S_1 \cup S^{(i)}_{n-x+1}, \dots, S_x \cup S^{(i)}_n, S_{x+1} \cup S^{(i)}_1, \dots, S_n \cup S^{(i)}_{n-x})$ where the indices of $\mathcal{S}^{(i)}$ wrap around to 1 after $n$, and,
            \item $x \gets (x + (\text{number of higher colors in $\mathcal{S}^{(i)}$})) \bmod n$.
        \end{enumerate}
    \end{enumerate}
    It is not difficult to see that, for every loop just after step 2 (b), colors $1, \dots, x$ are higher colors of $\mathcal{S}$, and colors $x+1, \dots, n$ are not (or, when $x = 0$, all colors are higher colors). Also, observe that the algorithm works in a way that $r_1, \dots, r_{\min(c, n)}$ will have colors $1, \cdots, \min(c, n)$, respectively, as $T_1, \dots, T_{\min(c, n)}$ are singular subtrees.
    
    The problem is that, in order to obtain a coloring of $T$, we need to consider vertex $r$. We divide it into two cases to solve this issue.
    \begin{itemize}
        \item (Case 1. $c \geq n$) $r_1, \dots, r_n$ will have colors $1, \dots, n$, respectively. Therefore, $r$ can be left uncolored, and $\mathcal{S}$ is already a maximum equitable $n$-coloring of $T$.
        \item (Case 2. $c < n$) We aim to color vertex $r$ with color $n$. The case when this cannot be done is that some $r_i$ already has color $n$. However, since $r_1, \dots, r_c$ have colors $1, \dots, c \ (< n)$, we know that $T_i$ is not a singular subtree. So, there is another higher color inside $T_i$, say color $x$ (in the context of coloring $\mathcal{S}$; $x \neq n$ must hold). Then, we can swap color $x$ and color $n$ inside the entire $T_i$ --- then $r_i$ no longer has color $n$, and the number of vertices of each color does not change. We repeat this process until no $r_i$ has color $n$. Since color $n$ is used by the least number of vertices, coloring vertex $r$ in color $n$ does not break the equitability condition, and color $n$ becomes a higher color.
    \end{itemize}
    
    Therefore, we obtained the desired coloring of $T$.
\end{proof}

\end{document}